\newtheorem{Thm}{Theorem}
\newtheorem{Lem}{Lemma}
\newtheorem{Prob}{Problem}
\newtheorem*{Claim}{Claim}
\title{Making Bidirected Graphs Strongly Connected}
\author{Tatsuya Matsuoka\thanks{The University of Tokyo, Japan ({\tt tatsuya\_matsuoka@mist.i.u-tokyo.ac.jp}).} \and Shun Sato\thanks{The University of Tokyo, Japan ({\tt shun\_sato@mist.i.u-tokyo.ac.jp}).}
}
\date{September, 2017}
\begin{document}
\maketitle

\begin{abstract}
We consider problems to make a given bidirected graph strongly connected with minimum cardinality of additional signs or additional arcs.
For the former problem, we show the minimum number of additional signs and give a linear-time algorithm for finding an optimal solution.
For the latter problem, we give a linear-time algorithm for finding a feasible solution whose size is equal to the obvious lower bound or more than that by one.
\end{abstract}

\section{Introduction}
\label{intro}
Problems to make a given graph (strongly) connected are well-investigated.
The minimum number of additional edges to make a given undirected graph connected and that of additional arcs to make a given directed graph strongly connected~\cite{ET1976} are well-known.

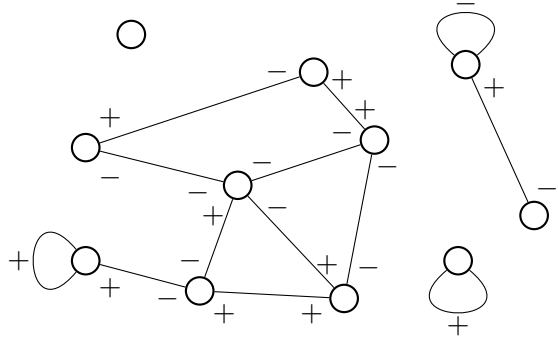
\begin{figure}[htb]
\centering
\begin{tikzpicture}[xscale = 1,yscale = 1]
\tikzset{vertex/.style={fill = white, draw, circle,thick}};
% vertices
\node[vertex] (v1) at (-1.4,2) {};
\node[vertex] (v2) at (-2,0.5) {};
\node[vertex] (v3) at (-2,-1) {};
\node[vertex] (v4) at (1,1.5) {};
\node[vertex] (v5) at (0,0) {};
\node[vertex] (v6) at (-0.5,-1.4) {};
\node[vertex] (v7) at (1.8,0.6) {};
\node[vertex] (v8) at (1.4,-1.5) {};
\node[vertex] (v9) at (2.9,-1) {};
\node[vertex] (v10) at (3,1.6) {};
\node[vertex] (v11) at (3.9,-0.4) {};
% links
\foreach \u / \v in {v2/v4,v2/v5,v3/v6,v4/v7,v5/v7,v5/v6,v5/v8,v6/v8,v7/v8,v10/v11}
\draw (\u) to (\v);
% self loops
\draw (v3) to [out=130,in=230,looseness=10] (v3);
\draw (v9) to [out=220,in=320,looseness=10] (v9);
\draw (v10) to [out=40,in=140,looseness=10] (v10);
% signs
\node [above right= 0 and -0.1 of v2] {$+$};
\node [below right= 0 and -0.1 of v2] {$-$};
\node [below right= -0.05 and -0.1 of v3] {$+$};
\node [left= 0.4 of v3] {$+$};
\node [left= 0 of v4] {$-$};
\node [below right= -0.3 and -0.05 of v4] {$+$};
\node [above right= -0.1 and -0.1 of v5] {$-$};
\node [below right= -0.1 and 0.1 of v5] {$-$};
\node [below left= -0.3 and 0.1 of v5] {$-$};
\node [below left= 0 and -0.1 of v5] {$+$};
\node [below right= -0.1 and -0.1 of v6] {$+$};
\node [above left= 0 and -0.3 of v6] {$-$};
\node [below left= -0.3 and 0 of v6] {$-$};
\node [above left= 0 and -0.3 of v7] {$+$};
\node [above left= -0.3 and 0 of v7] {$-$};
\node [below right= -0.05 and -0.25 of v7] {$-$};
\node [above right= 0 and -0.1 of v8] {$-$};
\node [above left= 0.05 and -0.2 of v8] {$+$};
\node [below left= -0.2 and 0 of v8] {$+$};
\node [below = 0.4 of v9] {$+$};
\node [above = 0.35 of v10] {$-$};
\node [below right= -0.1 and -0.05 of v10] {$+$};
\node [above right= -0.05 and -0.25 of v11] {$-$};
\end{tikzpicture}
\caption{Bidirected Graph.}
\label{FigBidirectedGraph}
\end{figure}
The concept of bidirected graphs (Figure~\ref{FigBidirectedGraph}; the precise definition will be given later in Section 2) was introduced by Edmonds and Johnson~\cite{EJ1970}.
It is a common generalization of undirected graphs and directed graphs.
For bidirected graphs, Ando, Fujishige and Nemoto~\cite{AFN1996} defined the notion of strong connectivity and gave a linear-time algorithm for the strongly connected component decomposition.
However, problems to make a given bidirected graph strongly connected have not been formulated.

In this paper, we consider problems to make a given bidirected graph strongly connected with minimum cardinality of additional signs or additional arcs.

\subsection{Related Works}
It is obvious that the minimum number of additional edges to make a given undirected graph connected is fewer than the number of connected components of a given graph by one.
Eswaran and Tarjan~\cite{ET1976} gave the minimum number of additional arcs to make a given directed graph strongly connected and that of additional edges to make a given undirected graph \emph{bridge-connected} (\emph{2-edge-connected}) or \emph{biconnected} (\emph{2-vertex-connected}).
Linear-time algorithms for finding an optimal solution of these problems are also given in \cite{ET1976}.
Note that they defined an operation called ``condensation'' which transforms a general directed graph to an acyclic directed graph.
We can focus on the acyclic case since for this problem we can obtain a solution of the original problem by solving the problem on the condensed graph.
For a directed graph $G=(V, A)$, $v\in V$ is a \emph{source} if $\delta (v)\geq 1,\ \rho (v)=0$, a \emph{sink} if $\rho (v)\geq 1,\ \delta (v)=0$ and an \emph{isolated vertex} if $\rho (v)=\delta (v)=0$ (in directed graphs, $\delta$ and $\rho$ denote the out-degree and in-degree functions, respectively).

\begin{Thm}[Eswaran--Tarjan~\cite{ET1976}]
Let $G=(V, A)$ be an acyclic directed graph with the set $S\subseteq V$ of sources, the set $T\subseteq V$ of sinks and the set $Q\subseteq V$ of isolated vertices $(|S|+|T|+|Q|>1)$.
Then the minimum number of additional arcs to make the given graph strongly connected is $\max\{ |S|, |T|\} +|Q|$.
\end{Thm}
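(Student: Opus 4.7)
The plan is to establish matching lower and upper bounds on the number of arcs to add.

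For the lower bound, I would count degree requirements. In any strongly connected digraph every vertex must have both positive in-degree and positive out-degree; since $G$ is acyclic, each source needs a new incoming arc, each sink needs a new outgoing arc, and each isolated vertex needs both. An added arc $(u,v)$ supplies at most one new incoming arc (at $v$) and at most one new outgoing arc (at $u$), so the number of added arcs is at least $|S|+|Q|$ and at least $|T|+|Q|$, yielding the bound $\max\{|S|,|T|\}+|Q|$.

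For the upper bound I would construct an augmentation of exactly this size. Assume without loss of generality $|S|\leq|T|$, and first suppose $|Q|=0$. The construction produces orderings $s_1,\dots,s_{|S|}$ of the sources and $t_1,\dots,t_{|T|}$ of the sinks together with, for each $i\leq|S|$, a directed path in $G$ from $s_i$ to $t_i$. Then I would add cyclic arcs $(t_i,s_{i+1 \bmod |S|})$ for $i=1,\dots,|S|$, together with an arc $(t_j,s_1)$ for each leftover sink $j>|S|$; this uses exactly $|T|=\max\{|S|,|T|\}$ arcs. Isolated vertices are then handled one by one: for each $q\in Q$ pick any arc $(x,y)$ of the current graph and replace it with the two arcs $(x,q),(q,y)$, using one extra arc per isolated vertex, for the claimed total of $\max\{|S|,|T|\}+|Q|$.

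Strong connectivity of the result follows from two observations. First, in an acyclic digraph every vertex lies on some directed path from a source to a sink, so every original vertex can forward-reach some $t_k$ and is forward-reachable from some $s_\ell$. Second, the added arcs together with the selected source-sink paths form a single directed cycle through every $s_i$ and every $t_j$, so any sink reaches any source. Combining the two, every pair of vertices is mutually reachable.

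The main obstacle is producing the source-sink pairing. Naively one would seek a matching saturating $S$ in the bipartite reachability graph whose edges are pairs $(s,t)$ with $t$ reachable from $s$, but Hall's condition may fail even when $|S|\leq|T|$: for instance, several sources may all be forced to use the same sink. Following Eswaran and Tarjan, I would build the pairing iteratively rather than as a matching: at each step pick a source $s$ and a sink $t$ reachable from $s$, append $(s,t)$ to the sequence, and treat the chosen $s$-to-$t$ path as a ``super-vertex'' that plays the role of both source and sink in the remaining problem. The delicate step is to argue that (i)~this process runs for exactly $|S|$ iterations when $|S|\leq|T|$, and (ii)~the paths selected, once closed up by the cyclic arcs, together cover every vertex through the reachability argument above. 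Once this combinatorial lemma is in place, the arc count and the strong-connectivity verification follow directly.
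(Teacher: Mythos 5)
This theorem is quoted from Eswaran--Tarjan~\cite{ET1976} as background; the paper contains no proof of it, so I can only judge your proposal against the known argument. Your lower bound is complete and correct: every vertex of $S\cup Q$ needs a new head and every vertex of $T\cup Q$ a new tail, and each added arc supplies one head and one tail, giving $|A'|\ge\max\{|S|+|Q|,\ |T|+|Q|\}$.

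The upper bound has a genuine gap, and you have located it yourself but not closed it. The construction in your second paragraph requires, when $|S|\le|T|$, an injection $s_i\mapsto t_i$ of sources into distinct sinks with $t_i$ reachable from $s_i$; this is a system of distinct representatives in the reachability bipartite graph, and it need not exist. Concretely, take sources $s_1,s_2,s_3$, sinks $t_1,t_2,t_3$ and arcs $s_1t_1,\ s_2t_1,\ s_3t_2,\ s_3t_3$: both $s_1$ and $s_2$ reach only $t_1$, so the maximum matching has size $2<3$, yet three added arcs (e.g.\ $t_1s_3,\ t_2s_1,\ t_3s_2$) do make the graph strongly connected. So the pairing-plus-cyclic-shift scheme is not always executable, and the entire technical content of the theorem sits in the ordering lemma you defer at the end (Eswaran and Tarjan's Lemma~2, proved by an explicit search: one obtains orderings and an index $p\le|S|$ so that only a prefix of source--sink pairs is matched by paths, every leftover source reaches and every leftover sink is reached from the already-processed part, and the added arc set is correspondingly more elaborate than a single cycle on a perfect pairing). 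Writing ``the delicate step is to argue (i) and (ii)'' and ``once this combinatorial lemma is in place\dots'' leaves precisely that lemma unproved, including why the super-vertex iteration terminates after the right number of steps when contractions can turn a source--sink pair into a new sink or even a new isolated vertex. Two smaller repairs: the isolated-vertex step must subdivide an arc of $A'$ rather than an original arc of $G$ (otherwise it costs two arcs per vertex and nothing may be deleted from $G$), and it needs a base case for $S=T=\emptyset$, where the augmentation is a Hamiltonian cycle on $Q$ and there is no arc to subdivide.
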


For an undirected graph $G=(V, E)$, $v\in V$ is called a \emph{pendant} if $\delta (v)=1$ and $V'\subseteq V$ is called a \emph{pendant block} if it is a 2-vertex-connected component and it contains exactly one \emph{cutnode} (for undirected graphs, $\delta$ denotes the degree function).
Note that $v\in V$ is a cutnode if the original graph is connected and the graph induced by $V\setminus \{ v\} $ is disconnected.
Similarly, $V'\subseteq V$ is called an \emph{isolated block} if it is a 2-vertex-connected component and it contains no cutnode.

\begin{Thm}[Eswaran--Tarjan~\cite{ET1976}]
Let $G=(V, E)$ be an undirected graph with the set $P\subseteq V$ of pendants and the set $Q\subseteq V$ of isolated vertices $(|P|+|Q|>1)$.
Then the minimum number of additional edges to make the given graph 2-edge-connected is $\lceil |P|/2\rceil +|Q|$.
\end{Thm}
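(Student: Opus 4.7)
My plan is to prove the theorem in two parts: a matching lower bound via a degree-counting argument, and an upper bound by an explicit construction based on the block structure of $G$.

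For the lower bound, observe that every vertex of a 2-edge-connected graph has degree at least two. Hence any augmentation of $G$ to a 2-edge-connected graph must supply at least one new edge-endpoint at each pendant and at least two new endpoints at each isolated vertex. Since $k$ new edges contribute exactly $2k$ endpoints, we get $2k \ge |P| + 2|Q|$, so $k \ge \lceil |P|/2 \rceil + |Q|$.

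For the upper bound, I would first reduce to a forest-augmentation problem by contracting each maximal 2-edge-connected subgraph of $G$ to a single vertex, obtaining a forest $F$. Contraction preserves 2-edge-connectivity both of the graph and of any candidate augmentation, so it suffices to augment $F$ with $\lceil |P|/2\rceil + |Q|$ edges; in $F$ the pendants of $G$ are the leaves and the isolated vertices of $G$ remain singleton trees. I then use the classical DFS-order pairing: list the $|P|$ leaves of $F$ cyclically as $u_1,\ldots,u_{|P|}$ by concatenating DFS leaf enumerations of the trees, and add the $\lceil |P|/2\rceil$ edges $\{u_i,u_{i+\lceil |P|/2\rceil}\}$ with indices taken modulo $|P|$. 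The isolated vertices are folded in at unit cost by replacing one added edge $\{u_i,u_j\}$ by a chain through all $|Q|$ isolated vertices from $u_i$ to $u_j$, which uses exactly $|Q|$ extra edges; if $|P|=0$, close the isolated vertices into a single cycle of length $|Q|$ instead. The total is $\lceil |P|/2\rceil + |Q|$, matching the lower bound.

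The main obstacle is verifying that this augmentation truly destroys every bridge of $F$, so that the result is 2-edge-connected and not merely of minimum degree two. The key combinatorial lemma is that removing any edge $e$ of $F$ partitions the leaves of $F$ into two sets which are each contiguous in the cyclic DFS order; the diametric pairing $i\leftrightarrow i+\lceil |P|/2\rceil$ then forces at least one added edge to straddle this split and close a cycle through $e$, demoting $e$ from bridge status. I expect the fiddly part of the proof to be the boundary cases — $|P|\le 1$, the interaction between the isolated-vertex chain and the pairing, and forests whose trees have very few leaves each — which require small ad-hoc adjustments but do not affect the overall count $\lceil |P|/2\rceil + |Q|$.
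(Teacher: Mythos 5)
Your overall strategy (degree-counting lower bound, then contract 2-edge-connected components to a forest and pair leaves diametrically in a cyclic DFS order) is the standard Eswaran--Tarjan argument, and the lower bound half is correct as written. The paper itself gives no proof of this quoted result, so I am judging the proposal on its own terms.

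There is, however, one concrete step that fails: the claim that after contracting the 2-edge-connected components, ``the pendants of $G$ are the leaves of $F$ and the isolated vertices of $G$ remain singleton trees.'' Every pendant of $G$ is indeed a leaf of the bridge-forest $F$, but the converse is false: a leaf of $F$ may be a nontrivial 2-edge-connected component incident to exactly one bridge, containing no vertex of degree one. Concretely, let $G$ consist of a vertex $v$, two pendants $p_1,p_2$ adjacent to $v$, and a triangle $t_1t_2t_3$ with $t_1$ adjacent to $v$. Then $|P|=2$, $|Q|=0$, so the formula predicts one additional edge, but the bridge-forest is a star with three leaves ($p_1$, $p_2$, and the triangle), and no single added edge can place all three bridges $vp_1$, $vp_2$, $vt_1$ on cycles; the true minimum is two. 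So your construction, run with only $|P|$ leaves paired, would leave the bridge into the triangle uncovered, and in fact the theorem as literally stated (with $P$ and $Q$ read off from $G$ rather than from its bridge-forest) is false for such $G$. The intended reading --- consistent with the acyclicity hypothesis in the directed-graph version (Theorem~1) and with the paper's use of condensation --- is that $G$ is already a forest, or equivalently that $P$ and $Q$ count the leaves and isolated nodes of the condensed forest. Under that reading your argument is the standard correct one (the contiguity lemma for DFS leaf orders and the diametric pairing do go through, modulo the boundary cases you flag); you should either add the forest hypothesis explicitly or restate $P$ and $Q$ in terms of the bridge-forest before contracting.
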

\begin{Thm}[Eswaran--Tarjan~\cite{ET1976}]
Let $G=(V, E)$ be an undirected graph with the set $\mathcal{P}\subseteq 2^V$ of pendant blocks and the set $\mathcal{Q}\subseteq 2^V$ of isolated blocks $(|\mathcal{P}|+|\mathcal{Q}|>1)$. Then the minimum number of additional edges to make the given graph 2-vertex-connected is $\max \left\{ d-1, \lceil |\mathcal{P}|/2\rceil +|\mathcal{Q}|\right\}$.
Here, 
\begin{align*}
d:=&\max \{ \# (\mbox{2-vertex-connected components containing }v)\mid v\in V \} \\
&+\# (\mbox{connected components})-1.
\end{align*}
\end{Thm}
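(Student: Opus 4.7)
The plan is to establish the lower bound by two independent arguments and then match it with a constructive upper bound built on the block-cut forest of $G$.

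For the inequality $|F|\ge \lceil|\mathcal{P}|/2\rceil+|\mathcal{Q}|$, I count endpoints of additional edges lying in each pendant or isolated block. If $P\in\mathcal{P}$ has unique cutnode $c_P$, then $F$ must have an endpoint in $P\setminus\{c_P\}$; otherwise deleting $c_P$ from $G\cup F$ still disconnects $P\setminus\{c_P\}$ from the rest, contradicting 2-vertex-connectivity. If $Q\in\mathcal{Q}$ is an isolated block, then $Q$ is an entire connected component of $G$, and $F$ must place endpoints at two distinct vertices of $Q$: one to link $Q$ to the rest, and a second so that the linking vertex is not itself a new cutnode. Since the non-cutnode vertex-sets across the blocks in $\mathcal{P}\cup\mathcal{Q}$ are pairwise disjoint and each added edge has exactly two endpoints, we obtain $2|F|\ge|\mathcal{P}|+2|\mathcal{Q}|$, hence $|F|\ge\lceil|\mathcal{P}|/2\rceil+|\mathcal{Q}|$.

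For the inequality $|F|\ge d-1$, pick a vertex $v^{*}$ lying in $k_{\max}$ blocks, where $k_{\max}$ is the maximum appearing in the definition of $d$. Deleting $v^{*}$ splits the connected component of $G$ containing $v^{*}$ into exactly $k_{\max}$ pieces (one per block at $v^{*}$), so $G-v^{*}$ has $k_{\max}+c-1$ connected components, where $c$ is the number of connected components of $G$. Since $(G\cup F)-v^{*}$ must be connected, and only the edges of $F$ not incident to $v^{*}$ survive the deletion, those edges must reduce $k_{\max}+c-1$ to $1$, giving $|F|\ge k_{\max}+c-2=d-1$.

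For the upper bound I would construct an explicit $F$ of size $\max\{d-1,\lceil|\mathcal{P}|/2\rceil+|\mathcal{Q}|\}$ using the block-cut forest. Regard each pendant block as a leaf with one demand slot and each isolated block as a two-slot leaf. Run a DFS from a root $r$ realizing $k_{\max}$, list the $L=|\mathcal{P}|+2|\mathcal{Q}|$ demand slots in DFS order, and pair the $i$-th with the $(i+\lceil L/2\rceil)$-th; when $d-1>\lceil L/2\rceil$, insert a few extra ``spreading'' edges at $r$ to reach exactly $d-1$. The main obstacle is verifying that this pairing simultaneously achieves 2-vertex-connectivity, i.e., bypasses every cutnode of $G$, while not exceeding the claimed bound. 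One argues by a case split on which term of the $\max$ dominates: using the half-shift property of DFS orderings, no matched pair lies in a single subtree at any cutnode precisely when the largest subtree at each cutnode contains at most $\lceil L/2\rceil$ slots, and this is exactly the regime in which the first term dominates; the $d-1$ regime is handled by the extra spreading edges at $v^{*}$, whose necessity is witnessed by the very lower bound proved above.
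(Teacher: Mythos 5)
This statement is quoted by the paper from Eswaran--Tarjan without proof, so there is no in-paper argument to compare against; judged on its own, your two lower bounds are essentially sound (the endpoint count $2|F|\ge|\mathcal{P}|+2|\mathcal{Q}|$ over the pairwise disjoint non-cutnode parts of pendant and isolated blocks, and the count of components of $(G\cup F)-v^{*}$, modulo small edge cases such as singleton isolated blocks, where one should count edge-endpoints rather than distinct vertices).

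The genuine gap is in the upper bound. For $2$-vertex-connectivity the property you verify --- no matched pair has both endpoints in one subtree at a cutnode --- is necessary but not sufficient: after deleting a cutnode $v$, the surviving added edges must join \emph{all} components of $G-v$ into a single component, and the rigid half-shift pairing can fail to do this even though every pair crosses subtrees. Concretely, let $v$ be adjacent to $x_1,x_2,x_3,x_4$ and let each $x_i$ be adjacent to $m$ leaves. Then $|\mathcal{P}|=4m$, $|\mathcal{Q}|=0$, $d-1=m$, the claimed optimum is $2m$, and every subtree at every cutnode carries at most $\lceil L/2\rceil$ slots; yet the half-shift pairing matches the leaves of $x_1$ with those of $x_3$ and the leaves of $x_2$ with those of $x_4$, so $(G\cup F)-v$ still has two components and $G\cup F$ is not biconnected, although $|F|$ equals the claimed optimum. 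This also refutes the asserted equivalence between ``largest subtree at most $\lceil L/2\rceil$ slots'' and ``the $d-1$ term dominates'': $d$ measures the number of branches at a vertex, not how many slots its largest branch holds. So the case split does not close the argument; one must choose the pairing adaptively so that the branch-connection graph at every cutnode becomes connected, which is exactly the delicate part of the known constructions (Eswaran--Tarjan, later corrected by Rosenthal--Goldner and Hsu--Ramachandran). Finally, the patch for the regime $d-1>\lceil L/2\rceil$ is underspecified: extra edges incident to the vertex $v^{*}$ realizing $k_{\max}$ contribute nothing to connecting $(G\cup F)-v^{*}$, so the ``spreading'' edges must run between distinct branches of $v^{*}$, and their interaction with the pairing still has to be checked against both terms of the bound.
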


On the other hand, problems on bidirected graphs also have been considered in the literature.
Ando, Fujishige and Nemoto~\cite{AFN1996} gave a linear-time algorithm for strongly connected component decomposition of bidirected graphs.
This algorithm is made use of for the block triangularization of skew-symmetric matrices~\cite{Iwata1998}.
Bidirected graphs are also used in the field of computational biology~\cite{MB2009,MGMB2007,Yasuda2015}.

The strongly connected component decomposition of a bidirected graph~\cite{AFN1996} is obtained by the ordinary strongly connected component decomposition of the associated directed graph, \emph{skew-symmetric graph}, which will be used in Section~3.
As pointed out in \cite{AFN1996}, the same graph is used by Zaslavsky~\cite{Zaslavsky1991} for the study of \emph{signed graphs}~\cite{Harary19531954}.
The notion of skew-symmetric graphs is defined first by Tutte~\cite{Tutte1967} with the name ``antisymmetrical digraphs'' independent from bidirected graphs.
There are also various problems on skew-symmetric graphs, and they have been intensively studied~\cite{GK1995,GK1996,GK2004}.
Study on bisubmodular polyhedra also made use of this skew-symmetric graph~\cite{AF1996} with the name ``exchangeability graph.''

\subsection{Our Contribution}
In this paper, we formulate problems to make a given bidirected graph strongly connected with minimum cardinality of additional signs or additional arcs.
Since self-loops have influence on the strong connectivity on bidirected graphs, 
these two problems arise depending on how to treat self-loops.

We first define the procedure called ``condensation'' on bidirected graphs.
We can reduce general cases to acyclic cases by this operation for the above two problem settings.
This can be done by using the strongly connected component decomposition algorithm for bidirected graphs devised  by Ando, Fujishige and Nemoto~\cite{AFN1996}.
This is similar to the fact that the condensation on directed graphs is done by using strongly connected component decomposition of directed graphs~\cite[Lemma~1]{ET1976}. 
However, since there are signs on each arc in bidirected graphs, we must define the appropriate signs for each arc on the condensed bidirected graph.

We discuss the two versions of the problems on bidirected graphs.
For the problem on signs, the obvious lower bound can be obtained from the necessity for connectivity of the underlying graph and a condition on signs around each vertex.
We show that this lower bound can be achieved for any acyclic bidirected graph and give a linear-time algorithm for finding an optimal solution.
For the problem on arcs, we give a linear-time algorithm for finding a feasible solution whose size is equal to the obvious lower bound or more than that by one.

\subsection{Organization}
The organization of the rest of this paper is as follows.
We give definitions and notation in Section 2.
In Section 3, we give two problem settings dealt with in this paper and devise the condensation operation on bidirected graphs, which reduces a general case to an acyclic case.
These two problems are discussed in Sections 4 and 5, respectively.
Section 6 is devoted to concluding remarks involving other problem settings.

\section{Preliminaries}
In this section, we introduce definitions and notation used in this paper.
Definitions in this section mainly refer Ando and Fujishige~\cite{AF1994} and Ando, Fujishige and Nemoto~\cite{AFN1996}.

A \emph{bidirected graph} is a triplet of a vertex set $V$, an arc set $A$ and a boundary operator $\partial : A\to 3^V (:=\{ (X, Y)\mid X, Y\subseteq V, X\cap Y=\emptyset \} )$ such that $\partial a=(X_a, Y_a)$ satisfies $1\leq |X_a|+|Y_a|\leq 2$ for each $a\in A$.
We use the notation $|\partial a|:=|X_a|+|Y_a|$.
Let $\partial^+: A\to 2^V$ and $\partial^-: A\to 2^V$ denote the operators with $\partial^+a=X_a$ and $\partial^-a=Y_a$.
This can be regarded that the signs are put on endpoints of \emph{links} or on \emph{self-loops} by $\partial^+$ and $\partial^-$ (here we call an arc a link if it connects two distinct vertices).
In other words, $\partial^+a$ and $\partial^-a$ are the sets of endpoints of $a$ with the signs ``$+$'' and ``$-$'', respectively.
We call an arc $a$ with $\partial a=(\{ v\} , \emptyset )$ a plus-loop at $v$ and $a$ with $\partial a=(\emptyset , \{ v\} )$ a minus-loop at $v$.

For simplicity, we define some other notation.
Let $\bar{\partial}: A\to 2^V$ denote the operator with $a\mapsto \partial^+a\cup \partial^-a$ for each $a\in A$.
For a bidirected graph $G=(V, A; \partial )$, let $\bar{G}=(V, A)$ be the undirected graph omitting the signs of $G$ (the \emph{underlying graph} of $G$).
We write as $a=(u, v)$ if $\bar{\partial}a=\{ u, v\}$.
Let us define a sign operator $\pi : \left\{ (a, u)\mid a\in A, u\in \bar{\partial}a\right\} \to \{ +, -\}$ as $\pi (a, u)=+$ if $u\in \partial^+a$ and $\pi (a, u)=-$ if $u\in \partial^-a$.
Let ``$(u, v)$ with $(\pi_1, \pi_2 )$'' $(\pi _1, \pi _2\in \{ +, -\} )$ denote an arc $a=(u, v)$ with $\pi (a, u)=\pi_1 $ and $\pi (a, v)=\pi_2 $.

An arc $a\in A$ is said to be \emph{positively (resp. negatively) incident to} $v$ if $v\in \partial^+a$ $(\mbox{resp. }v\in \partial^-a)$.
Arcs $a\in A$ and $a'\in A$ are said to be \emph{oppositely incident to} $v$ if $a$ is positively (resp. negatively) incident to $v$ and $a'$ is negatively (resp. positively) incident to $v$.

An alternating sequence of vertices and arcs $(v_0, a_1, v_1, a_2, \ldots , a_l, v_l)\ (l\geq 1)$ is called a \emph{path} if $a_i$ and $a_{i+1}$ are oppositely incident to $v_i$ $(i=1, 2, \ldots , l-1)$, $a_1$ is incident to $v_0$ and $a_l$ is incident to $v_l$.
This is called ($\pi (a_1, v_0)$, $\pi (a_l, v_l)$)-path from $v_0$ to $v_l$.
A path with $v_0=v_l$ is called a \emph{cycle} with a \emph{root} $v_0(=v_l)$.
If $a_l$ and $a_1$ are oppositely incident to $v_0$ and it includes distinct vertices, we call it a \emph{proper} cycle.
A cycle which is not proper is called an \emph{improper} cycle.
An improper cycle with $\pi (a_1, v_0)=\pi (a_l, v_l(=v_0))=+ \ (\mathrm{resp.} -)$ is called a \emph{positive (resp. negative) improper cycle}.
If a graph does not contain a proper cycle, we call it an \emph{acyclic} graph (Note that this definition is different from that of ``strongly acyclic'' or ``weakly (node- or edge-) acyclic'' in \cite{Babenko2006}).

For a bidirected graph $G=(V, A; \partial )$, two vertices $v, v'\in V$ are called \emph{strongly connected} if $G$ contains two paths $(v, a_1^1, v_1^1, a_2^1, \ldots , a_{l_1}^1, v')$ and $(v, a_1^2, v_1^2, a_2^2, \ldots , a_{l_2}^2, v')$ such that $a_1^1$ and $a_1^2$  are oppositely connected to $v$ and that $a_{l_1}^1$ and $a_{l_2}^2$ are oppositely connected to $v'$.
Note that these two paths need not to be vertex-disjoint.
A binary relation on $V$ can be defined by this strong connectivity: $v\sim v'$ if $v$ and $v'$ are strongly connected.
By assuming that $v\sim v$ for all $v\in V$, we obtain the equivalence relation $\sim$ on $V$.
Each equivalence class of $V$ on $\sim$ is called \emph{strongly connected component} and $G$ is called \emph{strongly connected} if $G$ has only one strongly connected component.

A vertex $v\in V$ is called \emph{inconsistent} if there exist improper cycles with root $v$ $C_1=(v, a_1^1, v_1^1, a_2^1, \ldots , a_{l_1}^1, v)$ and $C_2=(v, a_1^2, v_1^2, a_2^2, \ldots , a_{l_2}^2, v)$ such that $a_{l_2}^2$ and $a_1^1$ are oppositely incident to $v$.
It is stated in~\cite{AFN1996} that if $u$ and $v$ are strongly connected and $u$ is inconsistent, then $v$ is also inconsistent.
Thus, the notion of inconsistency can also be naturally defined on strongly connected components.

\section{Settings and the Condensation Operation}
In this section, we introduce the problem settings we tackle in this paper and explain the operation called condensation.

\subsection{Problem Settings}
We deal with the problems of the following type.
\begin{Prob}
Let $G=(V, A; \partial )$ be a bidirected graph.
Find additional arcs $A'$ and a boundary operator $\partial' : A\cup A'\to 3^V (\partial' a=\partial a\ (\forall a\in A))$ minimizing $F(A', \partial'):=\sum_{a'\in A'}f(\partial' a')$ $(f: \{ (X, Y)\mid X, Y\subseteq V, X\cap Y=\emptyset , 1\leq |X|+|Y|\leq 2\} \to \mathbb{R})$ such that $G':=(V, A\cup A'; \partial' )$ is a strongly connected bidirected graph.
\label{ProbGeneral}
\end{Prob}
Note that Problem~\ref{ProbGeneral} is NP-hard in general.
This can easily be shown by following the argument in the proof of~\cite[Theorem~1]{ET1976} as follows: 
we show this by reducing the following directed Hamiltonian cycle problem to Problem~\ref{ProbGeneral} with a certain function~$f$.
\begin{Prob}[Directed Hamiltonian Cycle Problem]
Let $D=(V, A)$ be a directed graph.
Find a directed Hamiltonian cycle in $D$.
\label{ProbDHC}
\end{Prob}
Set $f(\partial' a')=1$ if $a'=(v_1, v_2)$ with $(+, -)$ and there exists $a=(v_1, v_2)$ in $D$ and set $f(\partial' a')=2$ for any other possible arc $a'$.
There exists a solution of Problem~\ref{ProbGeneral} with respect to $ G = (V, \emptyset; \partial) $ satisfying $F(A', \partial' )=|V|$ if and only if there exists a solution of Problem~\ref{ProbDHC}.
Since Problem~\ref{ProbDHC} is NP-complete~\cite{Karp1972}, Problem~\ref{ProbGeneral} is NP-hard.

For the problem on undirected graphs or directed graphs similar to Problem~\ref{ProbGeneral}, it is natural to define $F(A', \partial' ):=|A'|$, i.e., minimization of the cardinality of additional edge (or arc) set.
For bidirected graphs, however, there are two reasonable candidates of $F(A', \partial' )$, i.e., $\sum_{a'\in A'}|\partial' a'|$ and $|A'|$.
In other words, $f(\partial' a'):=|\partial' a'|$ in the former setting and $f(\partial' a'):=1$ in the latter setting.
The former means the minimization of the number of the additional signs on arcs and the latter means that of arcs themselves.
In other words, the cost of a link is twice higher than that of a self-loop for the former problem and is the same for the latter problem.
These natural two problems arise because self-loops have influence on strong connectivity in bidirected graphs 
(see, e.g., Figure~\ref{FigEx}). 
Note that self-loops do not have any influence on the structure of (strong) connectivity for undirected graphs or directed graphs.

\subsection{Reduction to Acyclic Case}
We present a technique for reducing general cases to acyclic cases for Problem~\ref{ProbGeneral} with respect to $F(A', \partial' )=\sum_{a'\in A'}|\partial' a'|$ or $F(A', \partial' )=|A'|$.

For directed graphs, Eswaran and Tarjan [6] first condense the given directed graph to focus on acyclic cases. 
There, the condensed graph $\tilde{G} = (\tilde{V}, \tilde{A} )$ is obtained from the strongly connected component decomposition $C_1,C_2,\dots ,C_k$ of the original graph $G = (V,A)$, where 
\begin{align*}
\tilde{V} &:=\{ v_{C_1}, v_{C_2}, \ldots , v_{C_k}\},\\
\tilde{A} &:=\left\{ (v_{C_j}, v_{C_k})\mid \exists v\in V(C_j), \exists v'\in V(C_k)\mbox{ s.t. }(v, v')\in A\right\}.
\end{align*}

For bidirected graphs, we can utilize the linear-time algorithm for strongly connected component decomposition devised by Ando, Fujishige and Nemoto~\cite{AFN1996}. 
Precisely speaking, in order to appropriately define signs in the condensed graph, 
we use the strongly connected component decomposition of the associated skew-symmetric graph, 
which corresponds to the strongly connected component decomposition of the original bidirected graph $ G = (V, A; \partial) $~\cite[Corollary~5.4]{AFN1996}. 

In Algorithm CONDENSE($G$) described below, Steps~1--3 based on the steps of the strongly connected component decomposition algorithm of~\cite{AFN1996}. 
\newline

\noindent \textbf{Algorithm} CONDENSE($G$)
\begin{description}
\item[Step 1] Construct the associated skew-symmetric graph $G^{\pm}=(V^+\cup V^-, A^{\pm}) $, where $V^+$ and $V^-$ are copies of $V$ ($v^+\in  V^+$ and $v^-\in V^-$ denote the copy of $v\in V$) and $A^{\pm}$ are defined by 
\[
A^{\pm}=\{ (v^{\pi (a,v)}, w^{-\pi (a,w)})\mid a\in A, \bar{\partial}a=\{ v,w\} \}.
\]
Note that $v$ can be equal to $w$.
Here, $-\pi$ is equal to $-$ (resp. $+$) if $\pi =+$ (resp. $-$).
\item[Step 2] Decompose $G^{\pm}$ into strongly connected components $G^{\pm}_j=(U^{\pm}_j, B^{\pm}_j)$ $( j\in J)$. 
\item[Step 3] For each $j\in J$, define
\[
U_j=\{ v\in V\mid v^+\in U^{\pm}_j\} \cup \{ v\in V\mid v^-\in U^{\pm}_j\}.
\]
Then, define $W_i$ $(i\in I)$ be the distinct members of $U_j$ $(j\in J)$ 
and partition $I$ into $I_1$ and $I_2$ so that $W_i$ appears twice (resp. once) in the family $\{ U_j\mid j\in J\}$ for each $i\in I_1$ (resp. $I_2$).
\item[Step 4] For each $i\in I_1$, let $U_j^{\pm}$ be one of two strongly connected components corresponding to $W_i$.
If $U_j^{\pm}$ includes both an element in $V^+$ and an element in $V^-$, then for each $v^-\in V^-\cap U_j^{\pm}$ swap this for the counterpart.
\item[Step 5] Make a skew-symmetric graph $\hat{G}^{\pm}=(\hat{V}^+\cup \hat{V}^-, \hat{A})$ from $G^{\pm}$ as follows.
Let $\hat{v}_i^+$ (resp. $\hat{v}_i^-$) be a representative of $W_i$.
Let $\hat{V}^+:=\{ \hat{v}_i^+\mid i\in I\}$ and $\hat{V}^-:=\{ \hat{v}_i^-\mid i\in I\}$.
By using the map $\alpha : V^+\cup V^-\to \hat{V}^+\cup \hat{V}^-$ defined by
\[
\alpha (v^{\pi})=\hat{v}_i^{\pi}\quad (v\in W_i, \pi \in \{ +, -\} ),
\]
the arc set $\hat{A}$ is defined by
\[
\hat{A}=\{ (\alpha (v), \alpha (w))\mid (v, w)\in A^{\pm}\ (\alpha (v)\neq \alpha (w))\} .
\]
\item[Step 6] Return the bidirected graph $\tilde{G}$ corresponding to the skew-symmetric graph~$\hat{G}^{\pm}$.
\end{description}
Note that the strongly connected component $W_i$ is inconsistent if and only if $i\in I_2$ (see, Corollary 5.4 of \cite{AFN1996}).

From a feasible solution of the problem of minimization on signs or arcs for $\tilde{G}$, we can obtain a feasible solution for $G$ with the same value for the function $F$.
Conversely, from any feasible arc set for the original problem on $G$ we can obtain a solution for the problem on $\tilde{G}$ whose cost is less than or equal to the original cost.
These hold since a condensed graph of any strongly connected bidirected graph is strongly connected and each link in the obtained solution graph corresponds with links in the original solution graph.
Thus validity of the above condensation holds.

\section{Minimization on Signs}
In this section, we deal with Problem~\ref{ProbGeneral} with $F(A', \partial' )=\sum_{a'\in A'}|\partial' a'|$.

We first give some definitions on bidirected graphs.
Let $\gamma (=\gamma (G))$ denote the number of connected components in the underlying graph $\bar{G}$ of $G$.
A vertex $v\in V$ is called a \emph{source} (resp. a \emph{sink}) if $v$ is included in a connected component in $\bar{G}$ which has more than one vertices and any $a\in A$ connected to $v$ in $G$ is positively (resp. negatively) incident to $v$.
The set of sources is denoted by $S(=S(G))$ and that of sinks is denoted by $T(=T(G))$.
A vertex $v\in V$ is called an \emph{isolated vertex} if there exists no arc connected to $v$.
The set of isolated vertices is denoted by $Q(=Q(G))$.
A vertex $v\in V$ is called a \emph{pseudo-isolated vertex} if $\{ v\}$ is the connected component with only one vertex in $\bar{G}$ and there exists a self-loop at $v$.
The set of pseudo-isolated vertices is denoted by $Q'(=Q'(G))$.

When adding an arc $a=(u, v)$ to a bidirected graph $G$, we write ``with proper signs'' if signs on $a$ are as follows: $\pi (a, u)$ is equal to $+$ if $\{ a\in A\mid u\in \partial^+a\}$ is empty for the current bidirected graph and $\pi (a, u)$ is equal to $-$ otherwise.
The sign $\pi (a, v)$ is determined in the same way.

Now, let us consider Problem~\ref{ProbGeneral} with respect to the number of additional signs on an acyclic bidirected graph $G=(V, A; \partial )$.
Since a bidirected graph is strongly connected only if its underlying graph is connected, the value of the objective function for a feasible solution must be greater than or equal to $2(\gamma -1)$.
On the other hand, a bidirected graph with $|V|>1$ is strongly connected only if there are no sources, sinks, isolated vertices and pseudo-isolated vertices.
Therefore, the number of additional signs to make a bidirected graph strongly connected is greater than or equal to $|S|+|T|+|Q'|+2|Q|$.
Summing up, we obtain the lower bound $\max \{ 2(\gamma -1), |S|+|T|+|Q'|+2|Q|\}$.
Actually, this lower bound can be achieved.
\begin{Thm}
Let $G=(V, A; \partial )$ be an acyclic bidirected graph with $|V|>1$.
Then the minimum number of $\sum_{a'\in A'}|\partial' a'|$ such that $G'=(V, A\cup A'; \partial' )$ is a strongly connected bidirected graph is $\max \{ 2(\gamma -1), |S|+|T|+|Q'|+2|Q|\}$.
\label{ThmSign}
\end{Thm}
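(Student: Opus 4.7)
The lower bound $\max\{2(\gamma-1),D\}$ with $D:=|S|+|T|+|Q'|+2|Q|$ has been established in the paragraph immediately preceding the theorem: the $2(\gamma-1)$ term is forced by connecting $\bar G$, while the $D$ term is forced by supplying each bad vertex with the sign type(s) it lacks. My plan is to match this bound by an explicit construction.

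I would first prove a structural lemma: every connected component of $\bar G$ with more than one vertex contains at least one source or at least one sink, provided $G$ is acyclic. The proof is by contradiction---if every vertex of such a component had both a $+$ and a $-$ incidence, one could always continue an alternating walk by leaving every visited vertex along an arc oppositely incident to the one by which it was entered; by finiteness, such a walk eventually revisits a vertex and closes up into a proper cycle, contradicting acyclicity. This lemma ensures that every connected component hosts at least one \emph{demand slot}: each source gives a $-$ slot, each sink a $+$ slot, each pseudo-isolated vertex a single sign-unconstrained slot, and each isolated vertex two slots (one of each sign).

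The construction then pairs demand slots into additional arcs. List the $D$ slots across all components, and pair them so that the first $\gamma-1$ pairs each straddle two different components---making $\bar{G'}$ connected---while later pairs are internal. When $D\geq 2(\gamma-1)$, this produces $\lfloor D/2\rfloor$ links, plus one self-loop at the remaining demand vertex if $D$ is odd, for total cost $D$. When $D<2(\gamma-1)$, we add exactly $\gamma-1$ cross-component links and assign their $2(\gamma-1)$ signs so that every one of the $D$ slots is filled; the surplus $2(\gamma-1)-D$ signs land on vertices already having both sign types and therefore do no harm, giving total cost $2(\gamma-1)$.

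The main obstacle will be verifying that the resulting graph $G'$ is actually strongly connected, since making $\bar{G'}$ connected and eliminating all bad vertices is necessary but not transparently sufficient in the bidirected setting. I plan to exhibit, for each pair $v,v'\in V$, the two oppositely-signed paths required by the definition of strong connectivity, exploiting two features of the constructed $G'$: every vertex has both $+$ and $-$ incidences (so alternating walks can start or end with either sign), and the chain of added cross-component links, combined with the sign alternations already present within each original component, provides enough cyclic structure to realise both sign combinations at the two endpoints of any traversal.
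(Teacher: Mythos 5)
Your overall strategy---quote the lower bound from the preceding paragraph and then exhibit a construction meeting it---is the same as the paper's, and your sign-counting in the two regimes $D\geq 2(\gamma-1)$ and $D<2(\gamma-1)$ agrees with the paper's case analysis. But there is a genuine gap at exactly the point you defer to ``the main obstacle'': proving that your $G'$ is strongly connected. This is not a routine verification, and your construction as described is not specific enough to make it go through. The paper's own Figure~\ref{FigEx}(b) shows that connecting the underlying graph and eliminating every source, sink, isolated and pseudo-isolated vertex is \emph{not} sufficient for strong connectivity of a bidirected graph, so ``pair the demand slots, with the first $\gamma-1$ pairs straddling components'' cannot be justified by those two properties alone. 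The paper's algorithm is engineered around a stronger invariant: it designates two representatives $l_i,r_i$ in each component with at least two deficient vertices, chains them as $r_i$--$l_{i+1}$ through all such components and all isolated vertices, attaches the singly-deficient components to the surplus deficient vertices $w_j$, and then closes the construction (Steps 5-1/5-2/5-3) by placing the self-loops or leftover links specifically at the two ends $l_1$ and $r_{|\mathcal{C}|+|Q|}$ of the chain. That placement is what makes the backbone $\{l_i,r_i\}\cup Q$ not only strongly connected but \emph{inconsistent} (improper cycles of both signs at its vertices), and inconsistency is the mechanism---the paper's fourth Claim---by which ``every other vertex reaches the backbone with both starting signs'' upgrades to strong connectivity of all of $V$. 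Your construction guarantees inconsistency nowhere (e.g.\ when $D\leq 2(\gamma-1)$ you add no self-loops and place the surplus signs on arbitrary already-balanced vertices), so the key lemma you would need is simply not available for it. To repair the proposal you would have to either reproduce the paper's careful choice of where the pairs and self-loops go, or supply a new argument that your arbitrary pairing still creates an inconsistent strongly connected core.

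Two smaller points. First, your structural lemma (every component of $\bar G$ with more than one vertex contains a source or a sink when $G$ is acyclic) is indeed needed---the paper uses it implicitly when it indexes all components as $C_i^j$ with $j\geq 1$---but your proof sketch has a hole: an alternating walk that revisits a vertex closes into a cycle that may be \emph{improper} (same sign at the root on entry and exit), which does not contradict acyclicity. You need an extra step, e.g.\ extracting improper cycles of both signs at a common vertex and concatenating them into a proper cycle, or arguing in the associated skew-symmetric graph. Second, a pseudo-isolated vertex's slot is not ``sign-unconstrained'': it must receive the sign opposite to its existing self-loop, which is what the paper's ``proper signs'' rule enforces.
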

We now describe an algorithm for constructing an optimal solution (whose size is equal to the lower bound).
Let $C_1^1, C_2^1, \ldots , C_{k_1}^1, C_1^2, C_2^2, \ldots , C_{k_2}^2, \ldots , C_1^K, C_2^K, \ldots , C_{k_K}^K$ be the distinct vertex sets of connected components of $\bar{G}$ such that each $C_i^j$ contains just $j$ elements of $S\cup T\cup Q'\cup Q$.
Note that $\sum_{i=1}^Kk_i=\gamma$ and $\sum_{i=1}^Kik_i=|S|+|T|+|Q'|+|Q|$.
\newline

\noindent \textbf{Algorithm} ADDITIONAL SIGNS($G$)
\begin{description}
\item[Step 1] Let $A':=\emptyset$.
\item[Step 2] 
Let $u_1, u_2, \ldots , u_{L_1} \ (L_1:=k_1-|Q|)$ be the elements of $\left( \bigcup_{i=1}^{k_1}C_i^1 \right) \cap (S\cup T\cup Q')$.
If $L_1=\gamma =1$, add a self-loop at $u_1$ to $A'$ with a proper sign and go to Step 6.
If $L_1=\gamma >1$, add $\{ (u_1, u_i)\mid 2\leq i \leq L_1\}$ to $A'$ with proper signs and go to Step 6. 
\item[Step 3]
Let $\mathcal{C}=\left\{ C_1^2, C_2^2, \ldots , C_{k_2}^2, \ldots , C_1^K, C_2^K, \ldots , C_{k_K}^K \right\}$.
For each $C\in \mathcal{C}$, pick up two distinct elements of $C\cap (S\cup T)$ and label them as $l_i, r_i \ (i=1, 2, \ldots , |\mathcal{C}|)$.
Label the rest of the elements of $\bigcup_{C\in \mathcal{C}}C\cap (S\cup T)$ as $w_1, w_2, \ldots , w_{L_2}$ with $L_2 :=\sum_{i=3}^{K}(i-2)k_{i}$.
Add $\left\{ (u_i, w_i)\mid 1\leq i\leq \min\{ L_1, L_2 \} \right\}$ to $A'$ with proper signs.
\item[Step 4] 
Let $q_1, \ldots , q_{|Q|}$ be the elements of $Q$ and define $l_{|\mathcal{C}|+i}=r_{|\mathcal{C}|+i}=q_i$ for $i = 1, \ldots , |Q|$. 
Add $\left\{ (r_i, l_{i+1})\mid 1\leq i<|\mathcal{C}|+|Q|\right\}$ to $A'$ with proper signs. 
\item[Step 5] Compare $L_1$ with $L_2$.
\begin{description}
\item[Step 5-1] If $L_2 \leq L_1-2$, add $(u_{L_2+1}, l_1)$ and $\left\{ (u_i, r_{|\mathcal{C}|+|Q|})\mid L_2+1<i\leq L_1\right\}$ to $A'$ with proper signs.
\item[Step 5-2] If $L_2 =L_1-1$, add $(u_{L_1}, l_1)$ and a self-loop at $r_{|\mathcal{C}|+ |Q|}$ to $A'$ with proper signs.
\item[Step 5-3] If $L_2 \geq L_1$, add self-loops at $l_1$, $r_{|\mathcal{C}|+|Q|}$ and $w_i$ for $i=L_1+1, L_1+2, \ldots , L_2$ to $A'$ with proper signs.
\end{description}
\item[Step 6] Return $G'=(V, A\cup A'; \partial' )$.
\end{description}
Steps 3 and 4 are like as in Figure~\ref{FigAlgo1}.
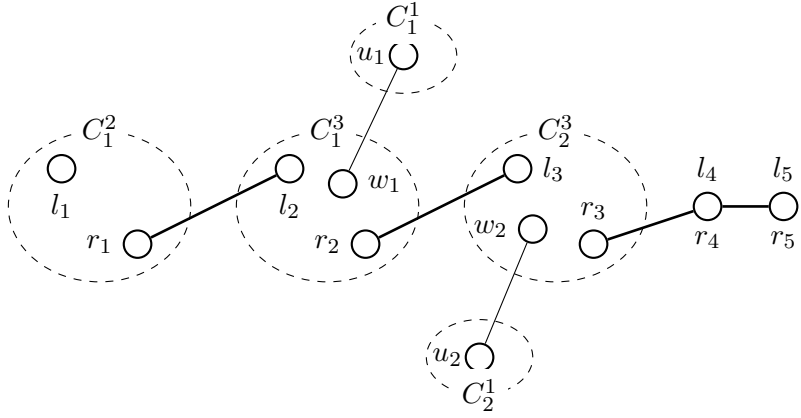
\begin{figure}[htb]
\centering
\begin{tikzpicture}[xscale = 1,yscale = 1]
\tikzset{vertex/.style={fill = white, draw, circle,thick}};
% \mathcal{C} 1
\node[vertex] (l1) at (-5.5,0.5) {};
\node[below = 0 of l1] {$l_1$};
\node[vertex] (r1) at (-4.5,-0.5) {};
\node[left = 0 of r1] {$r_1$};
\draw[dashed] (-5,0) circle [x radius=1.2, y radius=1];
\node[rectangle,fill=white] at (-5,1) {$C^2_1$};
% \mathcal{C} 2
\node[vertex] (l2) at (-2.5,0.5) {};
\node[below = 0 of l2] {$l_2$};
\node[vertex] (r2) at (-1.5,-0.5) {};
\node[left = 0 of r2] {$r_2$};
\node[vertex] (w1) at (-1.8,0.3) {};
\node[right = 0 of w1] {$w_1$};
\draw[dashed] (-2,0) circle [x radius=1.2, y radius=1];
\node[rectangle,fill=white] at (-2,1) {$C^3_1$};
% \mathcal{C} 3
\node[vertex] (l3) at (0.5,0.5) {};
\node[right = 0 of l3] {$l_3$};
\node[vertex] (r3) at (1.5,-0.5) {};
\node[above = 0 of r3] {$r_3$};
\node[vertex] (w2) at (0.7,-0.3) {};
\node[left = 0 of w2] {$w_2$};
\draw[dashed] (1,0) circle [x radius=1.2, y radius=1];
\node[rectangle,fill=white] at (1,1) {$C^3_2$};
% isolated
\node[vertex] (vi1) at (3,0) {};
\node[above = 0 of vi1] {$l_4$};
\node[below = 0 of vi1] {$r_4$};
\node[vertex] (vi2) at (4,0) {};
\node[above = 0 of vi2] {$l_5$};
\node[below = 0 of vi2] {$r_5$};
% pseudo isolated 1
\node[vertex] (vp1) at (-1,2) {};
\node[left = -0.1 of vp1] {$u_1$};
\draw[dashed] (-1,2) circle [x radius=0.7, y radius=0.5];
\node[rectangle,fill=white] at (-1,2.5) {$C^1_1$};
% pseudo isolated 2
\node[vertex] (vp2) at (0,-2) {};
\node[left = -0.1 of vp2] {$u_2$};
\draw[dashed] (0,-2) circle [x radius=0.7, y radius=0.5];
\node[rectangle,fill=white] at (0,-2.5) {$C^1_2$};
% additional arcs
\foreach \u / \v in {r1/l2,r2/l3,r3/vi1,vi1/vi2}
\draw[line width=1pt] (\u) to (\v);
\foreach \u / \v in {vp1/w1,vp2/w2}
\draw[thin] (\u) to (\v);
\end{tikzpicture}
\caption{Additional arcs in Steps 3 and 4 in the proposed algorithm: thin and bold lines represent the additional arcs in Steps 3 and 4, respectively (Arcs in each connected component are omitted).}
\label{FigAlgo1}
\end{figure}

The above algorithm returns an optimal solution in linear time.
This is confirmed by the following two lemmas.
\begin{Lem}
The output of the above algorithm is strongly connected.
\end{Lem}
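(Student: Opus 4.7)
The plan is to verify strong connectivity of the output $G'$ by first checking the local sign condition at every vertex, then the global connectivity of $\bar{G'}$, and finally producing, for each pair of vertices, the required pair of oppositely-signed paths.

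First I would confirm that the algorithm is well-defined and that after termination no vertex of $G'$ is a source, sink, isolated, or pseudo-isolated vertex. For each original element of $S\cup T\cup Q'$, exactly one arc of $A'$ is attached to it (as a $u_i$, $w_j$, $l_i$, or $r_i$); the ``proper signs'' convention forces the new incidence to carry the sign opposite to the one already dominating that vertex, thereby destroying its source/sink/pseudo-isolated status. For each $q\in Q$, the chain arc $(r_i,l_{i+1})$ (or the Step 5 closing arcs / self-loops) contributes both signs at $q$ because $q$ appears as both $l_{|\mathcal C|+i}$ and $r_{|\mathcal C|+i}$. I would do this as a short case analysis over the three branches of Step 5.

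Second, I would show $\bar{G'}$ is connected. Step 4 links all components in $\mathcal C$ together with all vertices of $Q$ into a single path; Step 3 attaches each $u_i$ (with $i\le\min\{L_1,L_2\}$) to some $w_i$ in a component of $\mathcal C$; and Step 5 takes care of the leftover $u_i$'s (in 5-1 and 5-2) or of the $\mathcal C$-chain by itself (in 5-3). A straightforward case check shows that in every branch the added arcs together with the underlying graph give a connected spanning subgraph. The degenerate subcase $L_1=\gamma$ handled in Step 2 is immediate.

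The main step, and the technical obstacle, is to exhibit the two oppositely-signed paths required by the definition of strong connectivity for any pair $v,v'\in V$. My plan is to show that the arcs added in Steps 2--5 together with one $l_i\!\to\! r_i$ path inside each component $C\in\mathcal C$ trace out a single \emph{proper} cycle $C^{\ast}$ (in the bidirected sense) that visits a vertex of every connected component of $\bar G$. The ``proper signs'' rule is exactly what is needed for the transition across each added arc to satisfy the opposite-incidence condition along a path, and the case analysis in Step 5 is calibrated so that the closing of the cycle is still proper: in 5-1 the chain closes through $u_{L_2+1},\ldots,u_{L_1}$; in 5-2 the extra self-loop at $r_{|\mathcal C|+|Q|}$ supplies the needed sign flip; in 5-3 the two self-loops at $l_1$ and $r_{|\mathcal C|+|Q|}$ play that role, and the stand-alone self-loops at the remaining $w_i$'s together with the path in their component give them access to $C^{\ast}$. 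Once $C^{\ast}$ is established as a proper cycle meeting every component of $\bar G$, the two directions of traversal of $C^{\ast}$ yield, for any two vertices on $C^{\ast}$, a pair of paths oppositely incident at both endpoints. For a vertex $v$ not on $C^{\ast}$, one uses an undirected path inside its component from $v$ to its representative on $C^{\ast}$; prepending this path (and its reverse, using that internal vertices of the component are neither sources nor sinks, so both signs are available at each transit vertex) to the two branches of $C^{\ast}$ delivers the required pair of paths.

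The main obstacle is precisely this last verification: keeping track of the signs along $C^{\ast}$ in each of the branches of Step 5 and making sure that an arbitrary vertex of a component can indeed be spliced in with either initial sign. I would handle it by an explicit lemma showing that in each connected component of $\bar G$, every vertex has an undirected path to the designated representative whose endpoint incidence can be chosen of either sign, using acyclicity of $G$ and the fact that internal vertices admit both signs.
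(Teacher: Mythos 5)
Your plan replaces the paper's key device --- \emph{inconsistency} --- by a single proper cycle $C^{\ast}$, and that is where it breaks. First, the added arcs plus one $l_i\to r_i$ path per component do not trace out such a cycle in all branches of Step 5. In Step 5-1 the structure is a chain with \emph{pendant} links: $u_{L_2+1}$ hangs off $l_1$ and every $u_i$ with $L_2+1<i\leq L_1$ hangs off the single vertex $r_{|\mathcal{C}|+|Q|}$, and the ``proper signs'' rule gives all but at most one of these pendant links the \emph{same} sign at $r_{|\mathcal{C}|+|Q|}$. Since consecutive arcs of a path must be oppositely incident, a closed walk cannot thread the pendant components one after another through $r_{|\mathcal{C}|+|Q|}$; between visits it must go elsewhere and turn around, and it must also reverse \emph{inside} each pendant component $C^1_i$ (entering and leaving through the same link). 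Turning around is precisely what requires an improper cycle rooted at $u_i$ (or the self-loop when $u_i\in Q'$) --- the inconsistency-type resource your sketch never establishes or even mentions. Second, even where a proper cycle through the designated vertices does exist (e.g.\ branch 5-3, traversing the chain forth and back through the two end self-loops), it cannot do the job you assign to it in the splicing step: going around a proper cycle from a vertex $x$ and back never flips the sign available at $x$ (you leave with one sign and return with the opposite one, exactly as in passing through). Hence if your two paths $P,P'$ from an off-cycle vertex $v$ reach the cycle with the same arrival sign, both can only be continued in the same direction around $C^{\ast}$ and reach any common target with the same sign, which does not certify strong connectivity. Your rescue lemma --- that inside its component every vertex has a path to the representative ``whose endpoint incidence can be chosen of either sign'' --- is false: in a component $l-x-r$ with arcs $(l,x)$ signed $(+,-)$ and $(x,r)$ signed $(+,+)$, every in-component path from $x$ to $l$ starts with $-$ at $x$ and ends with $+$ at $l$, so neither end sign can be chosen. (Also, internal vertices of a component in $\mathcal{C}$ \emph{can} be sources or sinks --- the $w_j$'s --- so the parenthetical justification in your last step is not correct as stated.)

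The paper's proof avoids all of this by a different mechanism: it shows (Claims 1--2) that $l_1$ and $r_{|\mathcal{C}|+|Q|}$, and with them the set $\{l_i,r_i\mid 1\leq i\leq|\mathcal{C}|\}\cup Q$, are strongly connected and \emph{inconsistent}, i.e.\ carry improper cycles of both signs, and then invokes the general principle (Claim 4) that two inconsistent vertices reachable from every other vertex by paths with both starting signs force the whole vertex set to be strongly connected. Inconsistency is exactly the sign-flipping resource that your cycle-plus-splicing argument lacks; without it (or an equivalent substitute) the proposal does not go through.
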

This can be confirmed by the following claims when $\gamma >L_1$.
(It can be shown more easily when $\gamma =L_1$.)
\begin{Claim}
The vertex set $\{ l_i, r_i\mid 1\leq i\leq |\mathcal{C}|\} \cup Q$ is strongly connected.
\end{Claim}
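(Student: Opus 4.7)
The plan is to exhibit, for every pair of vertices $x,y\in \{l_i,r_i\mid 1\le i\le |\mathcal{C}|\}\cup Q$, two bidirected paths from $x$ to $y$ whose first arcs are oppositely incident at $x$ and whose last arcs are oppositely incident at $y$. Since strong connectivity is an equivalence relation, it is enough to fix $x=l_1$ as a hub and establish the condition for every choice of $y$ in the set.

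I would first check that the proper-sign rule endows every vertex of the target set with incident arcs of both signs in the output $G'$. At $l_i$ for $i\ge 2$ and at $r_i$ for $i\le |\mathcal{C}|+|Q|-1$, the chain arc added in Step~4 receives the sign opposite to the common sign of the source/sink at that vertex; at each $q_j$ the two successive chain arcs touching it carry opposite signs, because $q_j$ is initially isolated; and at $l_1$ and at $r_{|\mathcal{C}|+|Q|}$ the arc or self-loop added in Step~5 plays the same role. Next, for each $C\in \mathcal{C}$ I would construct a bidirected path $P_i$ from $l_i$ to $r_i$ that stays inside $C$, making detours through the Step~3 arcs $(u_j,w_j)$ at internal source/sink vertices $w_j\in C$ as needed to flip signs. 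Concatenating $P_1,\ldots,P_{|\mathcal{C}|}$ with the Step~4 chain arcs then yields a valid bidirected walk from $l_1$ through all targets in sequence, which is the forward path.

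The second path, required to have opposite start/end signs at the endpoints, is obtained by closing the walk through the Step~5 additions. In cases~5-1 and~5-2 the Step~5 arcs at $l_1$ and $r_{|\mathcal{C}|+|Q|}$, combined with the arcs inside the $C^1$ components and the Step~3 arcs, close the forward walk into a proper bidirected cycle through the target set, and the backward traversal of this cycle furnishes the second path. In case~5-3 the self-loops at $l_1$ and at $r_{|\mathcal{C}|+|Q|}$ act as sign flippers: the second path first bounces off $l_1$'s loop, traverses the chain forward to $r_{|\mathcal{C}|+|Q|}$, bounces off its loop, and then walks backward along the chain to~$y$. In every case the proper-sign rule guarantees the required sign alternations at all internal vertices, and the first and last arcs of the two paths are oppositely incident at $l_1$ and at $y$.

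The main obstacle will be the existence of the intra-component paths $P_i$: a naive underlying-graph path from $l_i$ to $r_i$ may pass through a $w_j\in C\cap (S\cup T)$ whose two incident arcs in $C$ happen to share the same sign at $w_j$, and one must argue that the Step~3 arc $(u_j,w_j)$ (whose sign at $w_j$ is opposite by the proper-sign rule) always provides a valid detour that can be re-entered into $C$ afterwards. I expect this to be settled by a case analysis on $|C\cap (S\cup T)|$, or by induction on $|C|$, exploiting acyclicity of $G$ together with the proper-sign rule applied in Step~3.
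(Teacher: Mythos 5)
Your overall strategy coincides with the paper's: build a bidirected $l_i$--$r_i$ path inside each $C\in\mathcal{C}$ (the paper justifies this by exactly the two facts you cite, namely connectivity of the underlying graph of $C$ and the presence of both signs at every vertex after the proper-sign additions), concatenate these paths via the Step-4 chain arcs, and use a sign-reversing structure at $l_1$ and at $r_{|\mathcal{C}|+|Q|}$ to obtain the second, oppositely incident path. Your treatment of case 5-3 and of the sign bookkeeping at the $q_j$ and at the chain endpoints matches the paper's intent.

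The one concrete misstep is your claim that in cases 5-1 and 5-2 the Step-5 arcs ``close the forward walk into a proper bidirected cycle through the target set.'' They do not: in case 5-1 the arc $(u_{L_2+1},l_1)$ and the arcs $(u_i,r_{|\mathcal{C}|+|Q|})$ attach \emph{distinct} components $C^1_j$ pendant-wise to the two ends of the chain, and no arc leads from those components back to the opposite end, so no proper cycle through both $l_1$ and $r_{|\mathcal{C}|+|Q|}$ is created; the same applies to $(u_{L_1},l_1)$ in case 5-2. The correct mechanism is the same ``bounce'' you invoke for case 5-3: the walk $l_1\to u_{L_2+1}\to(\mbox{self-loop or improper cycle reachable inside }C^1_{L_2+1})\to u_{L_2+1}\to l_1$ is an improper cycle rooted at $l_1$ whose two end-signs equal the proper sign of $(u_{L_2+1},l_1)$ at $l_1$ and are therefore opposite to every original arc there; symmetrically at $r_{|\mathcal{C}|+|Q|}$ via some $u_i$ with $i>L_2+1$, or via its self-loop in case 5-2. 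This is precisely what the paper's one-line justification (``$l_1$ and $r_{|\mathcal{C}|+|Q|}$ have a self-loop or an improper cycle'') refers to, and with this correction your argument goes through. As for the intra-component paths $P_i$ that you leave open: no case analysis on $|C\cap(S\cup T)|$ is needed. Whenever a chosen underlying path fails to alternate at an internal vertex $v$, one inserts an improper cycle rooted at $v$ with end-signs opposite to the two offending arcs; such a cycle is found by walking greedily out of $v$ along an arc of the missing sign (always possible, since every vertex now carries both signs) until some vertex repeats, and the resulting closed walk is necessarily improper because the original arcs within each component contain no proper cycle and each attached $C^1_j$ hangs by a single bridge. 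The paper compresses this into its one-sentence assertion; your proposal identifies the right ingredients but stops short of this argument.
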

\begin{proof}
For each $C\in \mathcal{C}$, there exists a path between $l$ and $r$ (vertices chosen as $l_i$ and $r_i$).
This can be shown by the facts that $l$ and $r$ are connected in the underlying graph and that every vertex in $C$ has both plus and minus signs around it.
Since $l_1$ and $r_{|\mathcal{C}|+|Q|}$ have a self-loop or an improper cycle, the claim holds.
\end{proof}
\begin{Claim}
Each vertex in $\{ l_i, r_i\mid 1\leq i\leq |\mathcal{C}|\} \cup Q$ is inconsistent.
\end{Claim}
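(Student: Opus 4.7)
The plan is to invoke the propagation of inconsistency under strong connectivity, stated at the end of Section~2: if $u$ and $v$ are strongly connected and $u$ is inconsistent, then so is $v$. Together with the previous claim, which establishes strong connectivity of $\{l_i, r_i \mid 1 \leq i \leq |\mathcal{C}|\} \cup Q$ in the output graph $G'$, this reduces the proof to exhibiting a single inconsistent vertex in this set.

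The natural candidates are $l_1$ and $r_{|\mathcal{C}|+|Q|}$. I would focus on $r_{|\mathcal{C}|+|Q|}$ (the argument for $l_1$ is symmetric) and produce two improper cycles with root $r_{|\mathcal{C}|+|Q|}$ whose first and last arcs at this root are oppositely signed, so that $r_{|\mathcal{C}|+|Q|}$ satisfies the definition of inconsistency. First I would check that the ``proper signs'' rule forces $r_{|\mathcal{C}|+|Q|}$ to be incident in the final graph to arcs of both signs: in Step~5-3 a self-loop is added there, while in Steps~5-1 and 5-2 the new arc(s) added at $r_{|\mathcal{C}|+|Q|}$, together with the chain arc from Step~4 (when $|Q|>0$) and the arcs originally incident to it, must include at least one plus-incident and one minus-incident arc by the rule defining ``proper signs''.

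Next I would construct the two improper cycles explicitly. In Steps~5-2 and 5-3, the self-loop at $r_{|\mathcal{C}|+|Q|}$ is itself an improper cycle of one sign; the opposite-sign improper cycle is obtained by traversing the backbone back to $l_1$ along the internal paths of each $C \in \mathcal{C}$ (exactly as used in the previous claim, exploiting that non-$(S \cup T)$-vertices of $C$ carry both signs) together with the chain arcs added in Step~4, using the self-loop (or analogous structure) at $l_1$ as a sign-flipping pivot, and returning. In Step~5-1, no self-loop is added at $r_{|\mathcal{C}|+|Q|}$, but the new arcs $(u_i, r_{|\mathcal{C}|+|Q|})$ for $i = L_2+2,\dots,L_1$, combined with internal walks inside the singleton-marked components $C_i^1$ containing $u_i$ (and the pre-existing self-loop at $u_i$ if $u_i$ is pseudo-isolated), play the role of a self-loop and supply the short improper cycle of one sign; the long round-trip along the backbone gives the other.

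The main obstacle is the bookkeeping of signs. One has to verify (i) that the ``proper signs'' assignment, tracked in the order of Steps~2 through 5, actually installs arcs of both signs at $r_{|\mathcal{C}|+|Q|}$ in every subcase of Step~5, and (ii) that the long round-trip along the backbone does flip the sign at the root, which reduces to checking that the self-loop (or its Step~5-1 substitute) at $l_1$ introduces the required sign reversal in the traversing walk. A case split mirroring the three subcases of Step~5 suffices, and no idea beyond those used in the previous claim is required.
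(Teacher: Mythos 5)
Your proposal takes essentially the same route as the paper's proof: the paper also establishes inconsistency of a single special vertex ($l_1$, symmetric to your choice of $r_{|\mathcal{C}|+|Q|}$) by combining the improper cycle/self-loop available at one end with the backbone path to the other end and its improper cycle, yielding improper cycles of both signs at the root, and then lets the previous claim together with the propagation of inconsistency along strong connectivity finish the argument. Your extra attention to the Step~5-1 subcase and the sign bookkeeping only fills in details the paper leaves implicit, so no separate comparison is needed.
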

\begin{proof}
Suppose there is a $(+, +)$-path between $l_1$ and $r_{|\mathcal{C}|+|Q|}$. (The other case can be treated in the similar way.)
By the algorithm, there are a negative improper cycle rooted at $l_1$ and that rooted at $r_{|\mathcal{C}|+|Q|}$.
Thus due to the above $(+, +)$-path, there is a positive improper cycle rooted at $l_1$.
Therefore $l_1$ is inconsistent and hence the claim holds.
\end{proof}
\begin{Claim}
For each vertex $v\in V\setminus (\{ l_i, r_i\mid 1\leq i\leq |\mathcal{C}|\} \cup Q)$, there exist $v_1^*, v_2^*\in \{ l_i, r_i\mid 1\leq i\leq |\mathcal{C}|\} \cup Q$ (not necessarily distinct), a path $P_1$ between $v$ and $v_1^*$ and a path $P_2$ between $v$ and $v_2^*$ such that end arcs of $P_1$ and $P_2$ connected to $v$ are oppositely incident.
\end{Claim}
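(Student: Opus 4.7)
The plan is to give a case analysis on the position of $v$ in the decomposition used by the algorithm, combined with a greedy walk-extension argument inside the acyclic subgraphs $G|_X$, where $X$ ranges over connected components of $\bar G$.

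First I would note that after the algorithm terminates every vertex of $G'$ carries arcs of both signs, because the ``proper signs'' prescription always equips a vertex missing a sign with an arc carrying that sign. In particular at $v$ both a positively incident and a negatively incident arc exist, and these will serve as the oppositely incident initial arcs of $P_1$ and $P_2$. Second, the key tool is the following greedy-extension principle for the acyclic bidirected subgraph $G|_X$: every non-source/sink vertex of $X$ already carries both signs in $G$, so a bidirected walk in $G|_X$ extended by always choosing an opposite-sign arc at the current vertex never gets stuck at an internal vertex; and revisiting a vertex with the same incoming sign would produce a proper cycle in $G|_X$, so the walk must terminate at some vertex of $(S\cup T)\cap X$.

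Next I would split $v$ into four subcases: (a)~$v$ internal to some $C\in\mathcal{C}$; (b)~$v=w_j$ for some $j$; (c)~$v$ internal to some $C_i^1$ with $v\ne u_i$; (d)~$v=u_i$ for some $i\in\{1,\dots,L_1\}$. In each subcase I start $P_1$ and $P_2$ from $v$ with the two opposite signs at $v$ (in case (d) with $u_i\in Q'$, one of these starts is the self-loop and the other is the added arc) and extend the walk greedily inside the current component. By the extension principle the walk reaches some $s\in(S\cup T)\cap X$. If $s$ is a backbone vertex, i.e., an $l_i$ or $r_i$ with $i\le|\mathcal{C}|$ or a $q\in Q$, the walk is finished; otherwise $s=w_{j'}$ in some $C'\in\mathcal{C}$ or $s=u_j$ in some $C_j^1$, and I would follow the unique added arc incident to $s$. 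At $w_{j'}$ this is $(u_{j'},w_{j'})$; at $u_j$ it is either $(u_j,l_1)$ or $(u_j,r_{|\mathcal{C}|+|Q|})$ (Cases~5-1 and 5-2 with the distinguished index) or $(u_j,w_j)$ (in all other configurations). The proper sign on each added arc is exactly what makes the transition a legal bidirected step.

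The main obstacle will be guaranteeing that this iteration ``extend inside the current component, exit through an added arc'' terminates at a backbone vertex rather than cycling indefinitely through the $u\leftrightarrow w$ network. In Cases~5-1 and 5-2 at least one $u_j$ has its added arc pointing directly to a backbone endpoint $l_1$ or $r_{|\mathcal{C}|+|Q|}$, so once the walk is navigated to this $u_j$ the next step ends at the backbone. In Case~5-3 there is no such shortcut; instead, the self-loops placed at $l_1$ and $r_{|\mathcal{C}|+|Q|}$ together with the Step~4 backbone chain provide the required exit, and I would exploit the observation that each time the walk leaves some $C\in\mathcal{C}$ through a $w_{j'}$ and later returns to $w_{j'}$ (through $u_{j'}$ and the inside of $C_{j'}^1$) it re-enters $w_{j'}$ with the opposite sign and must therefore leave $w_{j'}$ via a different original arc of $C$; since $C$ is finite and contains the backbone vertices $l_i,r_i$ among its sources/sinks, finitely many such detours suffice to steer the walk to one of them. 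Performing this construction for the two opposite starting signs at $v$ yields $P_1$ and $P_2$ with oppositely incident end arcs at $v$, which is the claim.
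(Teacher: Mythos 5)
Your overall strategy coincides with the paper's: the ``proper signs'' rule guarantees that every vertex of the output $G'$ has both signs around it, and one then navigates from $v$ with either starting sign until the inconsistent, strongly connected backbone $\{ l_i, r_i\mid 1\leq i\leq |\mathcal{C}|\} \cup Q$ is reached. The paper compresses the navigation into a single sentence, while you try to carry it out explicitly; in doing so you correctly isolate the genuine difficulty, namely that a walk could shuttle forever through the non-backbone sources and sinks $w_j$ and the components $C_j^1$ without ever arriving at an $l_i$ or $r_i$.

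Your resolution of that difficulty in Case 5-3 has a gap, though. First, the assertion that after a detour through $u_{j'}$ the walk ``must leave $w_{j'}$ via a different original arc of $C$'' is false: the walk re-enters $w_{j'}$ needing the abundant sign of $w_{j'}$, and the arc it originally arrived on carries exactly that sign, so it may (and, if $w_{j'}$ has degree one in $C$, must) leave on the same arc. Hence there is no implicit progress measure of the form ``each detour consumes an arc of $C$.'' Second, even with such a measure, finiteness of $C$ would only bound the number of detours; it would not show that the set of sources/sinks reachable by iterating ``greedy walk inside $C$, then detour'' ever contains $l_i$ or $r_i$. What is actually needed is that the relation ``$s$ and $s'$ are joined by a path in $G|_C$ whose end arcs carry the abundant signs of $s$ and $s'$'' has connected transitive closure on $(S\cup T)\cap C$. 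This is true but not automatic: one proves it in the skew-symmetric graph $(G|_C)^{\pm}$, where the relation is symmetric by the sign-swapping, arc-reversing automorphism, and where a nontrivial partition of $(S\cup T)\cap C$ into relation-classes would split the node set into reachability-closed unions of weak components, contradicting the fact that every weak component of $(G|_C)^{\pm}$ projects onto all of $V(C)$ because $\bar{C}$ is connected. Without an argument of this kind, ``finitely many such detours suffice to steer the walk'' is an assertion of essentially the statement to be proved. (To be fair, the paper's own one-line justification elides the same point; but since your proof makes the navigation explicit, it must also close this step.)
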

\begin{proof}
By the algorithm ADDITIONAL SIGNS, each vertex in the resultant graph has both plus and minus signs around it.
Fix a vertex $v\in V\setminus (\{ l_i, r_i\mid 1\leq i\leq |\mathcal{C}|\} \cup Q)$.
By the above two claims, there is an inconsistent strongly connected component including $\{ l_i, r_i\mid 1\leq i\leq |\mathcal{C}|\} \cup Q$.
Since the underlying graph is connected and each vertex has both signs around it, we can reach this component from $v$ regardless of the initial sign.
Thus we can obtain $v_1^*, v_2^*, P_1$ and $P_2$.
\end{proof}
\begin{Claim}
If a vertex set $V'$ contains inconsistent vertices $v_1$ and $v_2$, and for each $v\in V'\setminus \{ v_1, v_2\}$ there are $(v, v_1)$-path and $(v, v_2)$-path with the opposite starting sign around $v$, then the whole $V'$ is strongly connected.
\end{Claim}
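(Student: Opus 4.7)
My plan is to reduce the claim to showing, for each $v \in V' \setminus \{v_1, v_2\}$, that $v \sim v_1$ and $v \sim v_2$; transitivity of $\sim$ then places all of $V'$ in a single equivalence class (and in particular forces $v_1 \sim v_2$).

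The engine I will use is the following observation about inconsistency. If $w$ is inconsistent, the two improper cycles $C_1, C_2$ witnessing this must carry opposite signs at $w$: each is improper so its first and last arcs share a sign at $w$, and the definition of inconsistency forces these two shared signs to differ across $C_1$ and $C_2$. Hence every inconsistent $w$ admits both a positive and a negative improper cycle at itself. Consequently, from any $(\pi_1,\pi_2)$-path ending at $w$, appending at $w$ the improper cycle whose arcs at $w$ carry sign $-\pi_2$ yields a valid $(\pi_1,-\pi_2)$-path ending at $w$ (the two arcs meeting at $w$ are then oppositely incident, so the alternating-sign condition is preserved). The symmetric statement for prepending cycles at an inconsistent starting vertex also holds.

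With this in hand, fix $v \in V' \setminus \{v_1, v_2\}$ and let $P_1$, $P_2$ be the $(\sigma,\tau_1)$- and $(-\sigma,\tau_2)$-paths from $v$ to $v_1$ and to $v_2$ supplied by the hypothesis. First I will form $P_2^{-1}\cdot P_1$, which joins at $v$ with opposite signs $-\sigma$ and $\sigma$ and hence is a legitimate $(\tau_2,\tau_1)$-path from $v_2$ to $v_1$. Next, applying the preliminary observation once at the endpoint $v_2$ (using its inconsistency, by prepending) and once at the endpoint $v_1$ (using its inconsistency, by appending), I obtain a $(-\tau_2,-\tau_1)$-path $R$ from $v_2$ to $v_1$. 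Finally, $P_2\cdot R$ joins at $v_2$ with opposite signs $\tau_2$ and $-\tau_2$ and is therefore a $(-\sigma,-\tau_1)$-path from $v$ to $v_1$; together with $P_1$ this exhibits two paths from $v$ to $v_1$ whose starting signs differ at $v$ and whose ending signs differ at $v_1$, witnessing $v\sim v_1$. The mirror-image construction with $v_1$ and $v_2$ exchanged gives $v\sim v_2$, and we are done.

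The only real difficulty will be the sign bookkeeping: at every concatenation I must check the alternating-sign condition at the junction vertex and verify that the improper cycle of the exact sign being spliced in actually exists at that vertex. Both points are already handled by the preliminary observation, so the argument is routine once that observation is in place. The degenerate case $V'=\{v_1,v_2\}$ is not covered by this argument, but it does not arise in the application of the claim above (where $V' \setminus \{v_1,v_2\}$ is nonempty by the construction of the algorithm ADDITIONAL SIGNS).
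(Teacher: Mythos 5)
Your argument is correct and matches the paper's one-line proof in substance: the two witnessing paths you build from $v$ to $v_1$ (namely $P_1$ and $P_2\cdot R$), traversed out and back, form exactly the proper cycle through $v$, $v_1$ and $v_2$ that the paper invokes, with the spliced-in improper cycles at $v_1$ and $v_2$ supplying the needed sign reversals. Your preliminary observation that an inconsistent vertex carries both a positive and a negative improper cycle is the same fact the paper uses implicitly, and your explicit flagging of the degenerate case $V'=\{v_1,v_2\}$ is a harmless extra precaution.
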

\begin{proof}
It holds since there exists a proper cycle including the above paths which passes $v_1$ and $v_2$ twice and $v$.
\end{proof}
Next, we check the number of additional signs.
\begin{Lem}
The number of additional signs is equal to $\max \{ 2(\gamma -1), |S|+|T|+|Q'|+2|Q|\}$.
\end{Lem}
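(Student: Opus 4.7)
The strategy is to sum the signs the algorithm introduces step by step and then verify, by a case analysis on which branch of Step~5 is executed, that the total matches $\max\{2(\gamma-1),\,|S|+|T|+|Q'|+2|Q|\}$.

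First I would record the per-step contributions, recalling that each link added ``with proper signs'' contributes $2$ signs and each self-loop contributes $1$. Step~2 contributes $0$ signs unless $L_1=\gamma$, in which case its contribution is $1$ (when $\gamma=1$) or $2(\gamma-1)$. Step~3 contributes $2\min\{L_1,L_2\}$ signs, and Step~4 contributes $2(|\mathcal{C}|+|Q|-1)$ signs whenever it is executed. Finally Step~5 contributes $2(L_1-L_2)$, $3$, or $L_2-L_1+2$ signs in branches 5-1, 5-2, and 5-3 respectively.

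Next I would invoke three purely arithmetic identities coming from the definitions of the $k_i$ and the labels:
\[
  L_1=k_1-|Q|,\qquad |\mathcal{C}|=\gamma-k_1,\qquad 2|\mathcal{C}|+L_1+L_2=|S|+|T|+|Q'|,
\]
the last being a rearrangement of $\sum_{i=1}^K i\,k_i=|S|+|T|+|Q'|+|Q|$. Substituting these into the per-step totals, the overall sign count collapses to $2(\gamma-1)$ whenever the algorithm exits early in Step~2 (with $1$ in the degenerate case $\gamma=1$) or enters branch~5-1, and collapses to $L_1+L_2+2|\mathcal{C}|+2|Q|=|S|+|T|+|Q'|+2|Q|$ in branches~5-2 and~5-3.

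It remains to show that the branch actually taken corresponds to the larger of the two quantities in the maximum. The triggering conditions $L_1=\gamma$ and $L_2\le L_1-2$ both translate via the identities above to $|S|+|T|+|Q'|+2|Q|\le 2(\gamma-1)$, so the output size $2(\gamma-1)$ realizes the maximum. Conversely $L_2\ge L_1-1$ translates to $|S|+|T|+|Q'|+2|Q|\ge 2\gamma-1>2(\gamma-1)$, so the output size $|S|+|T|+|Q'|+2|Q|$ again realizes the maximum. The main bookkeeping hazard, and really the only delicate point, is the treatment of corner cases: $\gamma=1$, configurations with $|\mathcal{C}|+|Q|\le 1$ so that Step~4 is empty, configurations with $L_1=0$ so that Step~3 is empty, and the sub-case of Step~5-3 in which $l_1$ and $r_{|\mathcal{C}|+|Q|}$ refer to the same vertex (nevertheless contributing two separate self-loops). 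Each such configuration needs a direct check but yields the same closed-form count as the uniform analysis above.
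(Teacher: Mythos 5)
Your proposal is correct and follows essentially the same route as the paper: tally the signs added in Steps 2--5 ($2\min\{L_1,L_2\}$, $2(|\mathcal{C}|+|Q|-1)$, and the branch-dependent amount), and use the arithmetic identity equivalent to $L_2-(L_1-2)=|S|+|T|+|Q'|+2|Q|-2(\gamma-1)$ to show that the branch taken always yields the larger term of the maximum, with the degenerate case $\gamma=1$ checked directly. No gaps beyond the corner cases you already flag, which the paper treats the same way.
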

\begin{proof}
If $L_1=\gamma =1$, add only one self-loop thus $1=\max \{ 0, 1\}$.
If $L_1=\gamma >1$, add $\gamma -1$ links thus $2(\gamma -1)=\max \{ 2(\gamma -1), \gamma \}$.

Otherwise, we go to Step~3 and add $\min \{ L_1, L_2\}$ links.
Next, we add $|\mathcal{C}|+|Q|-1$ links at Step~4.

At Step~5, we consider three cases.
It should be noted that 
$L_2 \leq L_1 - 2$ holds if and only if $2(\gamma -1)\geq |S|+|T|+|Q'|+2|Q|$ holds due to the following relation:
\begin{align*}
L_2-(L_1-2)&=\sum_{i=2}^{K}(i-2)k_i-(k_1-|Q|)+2\\
&=\sum_{i=1}^K ik_i-2\gamma +|Q|+2\\
&=|S|+|T|+|Q'|+2|Q|-2(\gamma -1).
\end{align*}
If $L_2\leq L_1-2$, we add $L_1-L_2$ links, thus the number of additional signs is
\begin{align*}
&2\min \{ L_1, L_2\} +2(|\mathcal{C}|+|Q|-1)+2(L_1-L_2)\\
={}&2(|\mathcal{C}|+|Q|+L_1-1)\\
={}&2(|\mathcal{C}|+k_1-1)\\
={}&2(\gamma -1)\\
={}&\max \{ 2(\gamma -1), |S|+|T|+|Q'|+2|Q|\} .
\end{align*}
If $L_2=L_1-1$, we add a link and a self-loop, thus the number of additional signs is
\begin{align*}
&2\min \{ L_1, L_2\} +2(|\mathcal{C}|+|Q|-1)+3\\
={}&2\sum_{i=2}^K(i-2)k_i+2\sum_{i=2}^Kk_i+2|Q|+1\\
={}&\sum_{i=2}^K(i-2)k_i+\sum_{i=1}^Kk_i+\sum_{i=2}^Kk_i+|Q|\\
={}&\sum_{i=1}^Kik_i+|Q|\\
={}&|S|+|T|+|Q'|+2|Q|\\
={}&\max \{ 2(\gamma -1), |S|+|T|+|Q'|+2|Q|\} .
\end{align*}
Otherwise, the number of additional signs is
\begin{align*}
&2\min \{ L_1, L_2\} +2(|\mathcal{C}|+|Q|-1)+(L_2-L_1+2)\\
={}&2(|\mathcal{C}|+|Q|)+L_2+L_1\\
={}&2\sum_{i=2}^Kk_i+2|Q|+\sum_{i=2}^K(i-2)k_i+L_1\\
={}&\sum_{i=1}^Kik_i+2|Q|-k_1+L_1\\
={}&|S|+|T|+|Q'|+2|Q|\\
={}&\max \{ 2(\gamma -1), |S|+|T|+|Q'|+2|Q|\} .
\end{align*}
Therefore, the number of additional signs is equal to the obvious lower bound.
\end{proof}
Both the above algorithm and the condensation algorithm run in linear time, thus one can obtain an optimal solution in linear time for a general input bidirected graph.
\begin{Thm}
Problem~\ref{ProbGeneral} with $F(A', \partial' )=\sum_{a'\in A'}|\partial' a'|$ can be solved in linear time.
\end{Thm}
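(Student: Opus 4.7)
My plan is to assemble the final theorem from the ingredients already set up in the paper: the condensation operation of Section~3.2, which reduces a general instance to an acyclic one, and the algorithm \textsc{Additional Signs} of this section, which solves the acyclic case optimally. The key facts I would invoke are: (i) by Theorem~\ref{ThmSign} and the preceding two lemmas, \textsc{Additional Signs} returns a strongly connected $G'$ whose number of added signs equals the lower bound $\max\{2(\gamma-1),|S|+|T|+|Q'|+2|Q|\}$ on any acyclic input; and (ii) the validity discussion at the end of Section~3.2, which states that a feasible sign- or arc-augmentation of the condensed graph $\tilde G$ lifts to a feasible augmentation of $G$ of the same cost, and conversely, so optimal costs coincide.

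The overall algorithm I would state is: given $G=(V,A;\partial)$, run \textsc{Condense}$(G)$ to obtain $\tilde G$; run \textsc{Additional Signs}$(\tilde G)$ to obtain an optimal augmentation $\tilde A'$ with boundary $\tilde \partial'$; lift $\tilde A'$ through the map $\alpha$ from Step~5 of \textsc{Condense} to an augmentation $A'$ of $G$ with the same total $|\partial'a'|$. Combining (i) and (ii) shows the resulting $G'=(V,A\cup A';\partial')$ is strongly connected and attains the optimum of Problem~\ref{ProbGeneral} with $F(A',\partial')=\sum_{a'\in A'}|\partial'a'|$.

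For the complexity claim, I would argue each stage runs in $O(|V|+|A|)$ time. \textsc{Condense} inherits linear-time complexity from the Ando--Fujishige--Nemoto strongly connected component decomposition used in Steps~1--3, followed by Steps~4--6 which are linear passes over the associated skew-symmetric graph. For \textsc{Additional Signs}, one traverses $\bar G$ once to identify connected components and classify every vertex as a source, sink, isolated, or pseudo-isolated vertex; this gives the families $\{C_i^j\}$ and the lists $u_\cdot,l_\cdot,r_\cdot,w_\cdot,q_\cdot$ in linear time. Steps~2--5 then add $O(\gamma+|S|+|T|+|Q|+|Q'|)=O(|V|+|A|)$ arcs, each with signs determined in $O(1)$ from a flag per vertex recording whether a plus-incident or minus-incident arc has already been seen (this is what ``with proper signs'' requires).

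The main obstacle I anticipate is not any deep step but a careful bookkeeping verification that the ``proper signs'' rule in \textsc{Additional Signs} can actually be implemented in $O(1)$ per added arc while remaining consistent with the analysis in the two lemmas; one must maintain, for each vertex of interest, a two-bit status (whether the current bidirected graph already has a plus-incident and a minus-incident arc at that vertex) and update it as arcs are appended. Similarly, in the lifting step, I would verify that each arc of $\tilde A'$ corresponds to a concrete pair of representative vertices of $G$ so that $A'$ is produced in time proportional to $|\tilde A'|$. Once these bookkeeping points are in place, the theorem follows immediately from Theorem~\ref{ThmSign} and the validity of the condensation.
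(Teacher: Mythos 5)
Your proposal matches the paper's argument: the paper also obtains this theorem by combining the linear-time condensation (via the Ando--Fujishige--Nemoto decomposition, with the lifting justified by the validity discussion at the end of Section~3.2) with the linear-time algorithm \textsc{Additional Signs}, whose optimality on acyclic inputs is given by Theorem~\ref{ThmSign} and the two lemmas. Your extra bookkeeping details (per-vertex sign flags for ``proper signs'' and the constant-time lifting through $\alpha$) are correct refinements of the same route, which the paper leaves implicit.
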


\section{Minimization on Arcs}
In this section, we deal with Problem~\ref{ProbGeneral} with $F(A', \partial' )=|A'|$.

Let $\lambda (G)$ be defined by $\lambda (G):=\max \left\{ \gamma -1, \lceil (|S|+|T|+|Q'|)/2\rceil +|Q|\right\} $.
Clearly, $\lambda (G)$ is the lower bound of Problem~\ref{ProbGeneral} with $F(A', \partial' )=|A'|$ (which can be derived as well as that for the problem on signs).
Unfortunately, however, there is a small example which cannot be made strongly connected by $\lambda (G)$ additional arcs (see Figure~\ref{FigEx}), whereas we can always achieve the lower bound when we deal with the number of additional signs as shown in the previous section.
For the original graph $G$ in Figure~\ref{FigEx} (a), we have
\[
\lambda (G)=\max \left\{ \bigg\lceil \frac{1+1+0}{2}\bigg\rceil +0\right\} =\max \left\{ 1, 0\right\} =1.
\]
Since there exist a source $s$ and a sink $t$ in $G$, 
we must add an arc $a=(s, t)$ with proper signs in order to extinguish both source and sink with one arc (see Figure~\ref{FigEx} (b)).
However, it is not strongly connected.
Actually, the minimum number of additional arcs to make $G$ strongly connected is two and one of the optimal solutions is shown in Figure~\ref{FigEx} (c).
On the other hand, there is also an graph $G$ which can be made strongly connected with $\lambda (G)$ additional arcs.
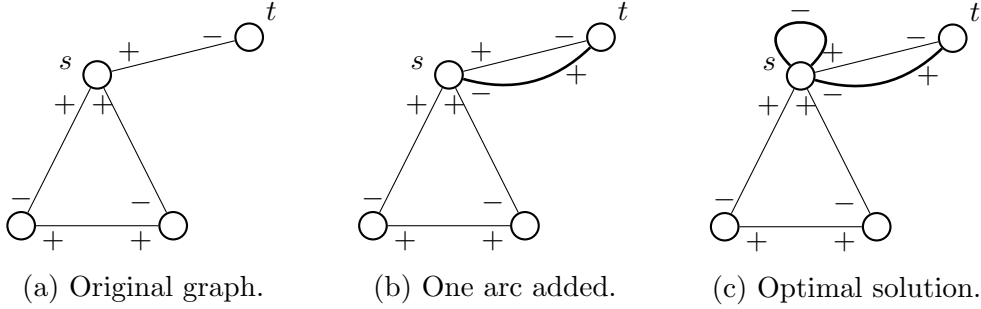
\begin{figure}[H]
\centering
\begin{minipage}{0.3\textwidth}
\centering
\begin{tikzpicture}
\tikzset{vertex/.style={fill = white, draw, circle,thick}};
% vertices
\node[vertex] (v1) at (2,2) {};
\node[vertex] (v2) at (0,1.5) {};
\node[vertex] (v3) at (-1,-0.5) {};
\node[vertex] (v4) at (1,-0.5) {};
% links
\foreach \u / \v in {v1/v2,v2/v3,v3/v4,v4/v2}
\draw (\u) to (\v);
% signs
\node [left = 0 of v1] {$-$};
\node [below left = 0 and 0 of v2] {$+$};
\node [below right = 0 and -0.35 of v2] {$+$};
\node [above right = -0.1 and 0 of v2] {$+$};
\node [above =-0.1 of v3] {$-$};
\node [below right = -0.2 and 0 of v3] {$+$};
\node [above left = -0.1 and 0 of v4] {$-$};
\node [below left = -0.2 and 0 of v4] {$+$};
% vertex name
\node [above left= -0.2 and 0.05 of v2]{$s$};
\node [above right= -0.05 and -0.05 of v1]{$t$};
\end{tikzpicture}

(a) Original graph.
\end{minipage}
\begin{minipage}{0.3\textwidth}
\centering
\begin{tikzpicture}
\tikzset{vertex/.style={fill = white, draw, circle,thick}};
% vertices
\node[vertex] (v1) at (2,2) {};
\node[vertex] (v2) at (0,1.5) {};
\node[vertex] (v3) at (-1,-0.5) {};
\node[vertex] (v4) at (1,-0.5) {};
% links
\foreach \u / \v in {v1/v2,v2/v3,v3/v4,v4/v2}
\draw (\u) to (\v);
% signs
\node [left = 0 of v1] {$-$};
\node [below left = 0 and 0 of v2] {$+$};
\node [below right = 0 and -0.35 of v2] {$+$};
\node [above right = -0.1 and 0 of v2] {$+$};
\node [above =-0.1 of v3] {$-$};
\node [below right = -0.2 and 0 of v3] {$+$};
\node [above left = -0.1 and 0 of v4] {$-$};
\node [below left = -0.2 and 0 of v4] {$+$};
% additional 
\draw [bend right,line width = 1pt] (v2) to (v1);
\node [below right = -0.15 and 0 of v2] {$-$};
\node [below left = 0.1 and -0.1 of v1] {$+$}; 
% vertex name
\node [above left= -0.2 and 0.05 of v2]{$s$};
\node [above right= -0.05 and -0.05 of v1]{$t$};
\end{tikzpicture}

(b) One arc added.
\end{minipage}
\begin{minipage}{0.3\textwidth}
\centering
\begin{tikzpicture}
\tikzset{vertex/.style={fill = white, draw, circle,thick}};
% vertices
\node[vertex] (v1) at (2,2) {};
\node[vertex] (v2) at (0,1.5) {};
\node[vertex] (v3) at (-1,-0.5) {};
\node[vertex] (v4) at (1,-0.5) {};
% links
\foreach \u / \v in {v1/v2,v2/v3,v3/v4,v4/v2}
\draw (\u) to (\v);
% signs
\node [left = 0 of v1] {$-$};
\node [below left = 0 and 0 of v2] {$+$};
\node [below right = 0 and -0.35 of v2] {$+$};
\node [above right = -0.1 and 0 of v2] {$+$};
\node [above =-0.1 of v3] {$-$};
\node [below right = -0.2 and 0 of v3] {$+$};
\node [above left = -0.1 and 0 of v4] {$-$};
\node [below left = -0.2 and 0 of v4] {$+$};
% additional arc
\draw [bend right,line width = 1pt] (v2) to (v1);
\node [below right = -0.15 and 0 of v2] {$-$};
\node [below left = 0.1 and -0.1 of v1] {$+$}; 
% additional self loop
\draw [line width = 1pt] (v2) to [out=45,in=135,looseness=10] (v2);
\node [above = 0.4 of v2] {$-$}; 
% vertex name
\node [above left= -0.2 and 0.05 of v2]{$s$};
\node [above right= -0.05 and -0.05 of v1]{$t$};
\end{tikzpicture}

(c) Optimal solution.
\end{minipage}
\caption{Example: the size of optimal solution is $\lambda (G)+1$.
Bold lines represent the additional arcs in (b) and (c).}
\label{FigEx}
\end{figure}

Now we aim at obtaining the upper bound of the size of optimal solutions.
Actually, we can show the next theorem.
\begin{Thm}
Let $G=(V, A; \partial )$ be an acyclic bidirected graph.
Then the minimum number of $|A'|$ such that $G'=(V, A\cup A'; \partial' )$ is a strongly connected bidirected graph is $\lambda (G)$ or $\lambda (G)+1$.
\label{ThmArc}
\end{Thm}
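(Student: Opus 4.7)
The lower bound $\lambda(G)$ is immediate from two obvious necessary conditions: the underlying graph must become connected, requiring at least $\gamma-1$ additional arcs; and every source, sink, pseudo-isolated, and isolated vertex must acquire a new sign of the missing type, where each added link supplies at most two such sign incidences (one per endpoint) and each added self-loop supplies two incidences of the same type at the same vertex, giving $\lceil (|S|+|T|+|Q'|)/2 \rceil + |Q|$ at minimum. So the task is to exhibit a feasible solution whose size is at most $\lambda(G)+1$.

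My plan is to design a constructive algorithm, parallel in spirit to ADDITIONAL SIGNS but optimized for arcs rather than signs, and to analyze it by splitting according to which term in $\lambda(G)$ dominates. In Case I, when $\gamma-1 \geq \lceil(|S|+|T|+|Q'|)/2\rceil+|Q|$, the Step~5-1 branch of ADDITIONAL SIGNS already adds exactly $\gamma-1 = \lambda(G)$ links and its output has been proved strongly connected, so that case is settled. In Case II the sign-fixing term dominates, and here the aim is to pair up as many problematic vertices as possible so that one added arc does double duty, repairing two missing signs while also contributing to connectivity.

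The construction I envisage for Case II orders the components containing problematic vertices, chooses two representatives $l_i,r_i$ in each multi-special component (and the unique problematic vertex in each singleton-special component), then chains these representatives together by links with proper signs so that each link simultaneously eliminates a source/sink/pseudo-isolated at both its endpoints. Isolated vertices, which contribute $2$ each to the lower bound, are spliced into the chain by using two incidences each. The chain is then closed into a cycle to create the improper cycles required for inconsistency. If the parity of $|S|+|T|+|Q'|$ and the distribution of specials among components cooperate, the closure is cost-free and $\lambda(G)$ arcs suffice; otherwise (as in Figure~\ref{FigEx}) a single extra self-loop or link is needed to close the cycle and to create inconsistency, producing $\lambda(G)+1$ arcs. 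The bookkeeping for the count will be an arithmetic case analysis analogous to Steps~5-1, 5-2, 5-3 in the previous section.

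The main obstacle I anticipate is not counting but verifying strong connectivity. Figure~\ref{FigEx}(b) shows that proper sign balance at every vertex plus connectivity of $\bar{G}$ does not imply strong connectivity, so a more careful argument in the style of the four Claims after Lemma~1 is required: first show that the closed chain of representatives together with the isolated vertices forms an inconsistent strongly connected set (here it is crucial that the closing arc produce the needed improper cycle roots, which is exactly where the $+1$ of Figure~\ref{FigEx} becomes unavoidable); then show every remaining vertex admits two paths into this core with oppositely signed end arcs, invoking that the acyclic hypothesis forces each vertex to reach a representative along each sign; and finally apply the analogue of Claim~4 to conclude strong connectivity of $V$. The delicate point is choosing the closing arc so that inconsistency is created with only one extra arc in the bad case, and showing that no instance forces two extra arcs.
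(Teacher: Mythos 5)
Your lower bound and your Case I (when $\gamma-1$ dominates, reuse of the Step~5-1 construction, which adds links only) agree with the paper. The gap is in Case II, and specifically in how the \emph{surplus} special vertices are handled --- the sources/sinks beyond the two representatives $l_i, r_i$ chosen in each multi-special component (the $w_j$ with $L_1 < j \leq L_2$ in the paper's notation). Your plan is to ``pair up as many problematic vertices as possible'' and close the chain into a cycle, but you give no rule for \emph{which} pairs to join, and an arbitrary pairing with proper signs can fail: a new arc between two surplus specials can reproduce exactly the Figure~\ref{FigEx}(b) phenomenon, a pocket from which walks starting with one of the two signs never reach the core, even though after the additions every vertex has both signs around it and $\bar{G}$ is connected. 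The supporting step you invoke, that ``the acyclic hypothesis forces each vertex to reach a representative along each sign,'' is not a valid principle (acyclicity gives nothing of the sort), so the analogue of Claims~1--4 does not go through as sketched; and the two assertions you explicitly flag --- that the closing of the cycle costs at most one extra arc, and that no instance forces two extra arcs --- are left unproved, which is the whole content of the upper bound $\lambda(G)+1$.

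The paper's proof supplies precisely the missing device. In the hard case ($L_2 \geq L_1$) it builds an auxiliary bidirected graph $\hat{G}$ whose vertex set consists of the surplus specials together with two representatives of $Q$, and whose arcs record \emph{existing} signed paths of the partially augmented graph between these vertices; it then takes a \emph{maximal matching} $M$ in $\hat{G}$, adds new arcs only to string the matched pairs into one cycle (Step~9) and to pair up the unmatched vertices, which by maximality have no existing path between them (Step~10, with at most one self-loop for odd parity). This yields a proper cycle alternating existing paths and new arcs, so after condensing, at most one source/sink/pseudo-isolated/isolated vertex survives (Steps~11--12), and a single additional self-loop --- the possible $+1$ --- finishes the job; the arc count is then compared with $\lambda(G)$ exactly as in the sign case. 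Without this matching on existing paths, or some equivalent concrete selection rule together with the condensation argument bounding the residual deficiency by one vertex, your construction and your count remain a plan rather than a proof of Theorem~\ref{ThmArc}.
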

Note that if the output of ADDITIONAL SIGNS($G$) contains at most one self-loop, then it is also an optimal solution of the problem of minimizing the number of additional arcs.
If the output of ADDITIONAL SIGNS($G$) contains more than 1 self-loops, however, we cannot guarantee the optimality for the problem on arcs in general.
We can construct a feasible solution of size $\lambda (G)$ or $\lambda (G)+1$ by the following algorithm.
\newline

\noindent \textbf{Algorithm} ADDITIONAL ARCS($G$)
\begin{description}
\item[Step 1] Let $A':=\emptyset$.
\item[Step 2] Let $u_1, u_2, \ldots , u_{L_1} \ (L_1:=k_1-|Q|)$ be the elements of $\left( \bigcup_{i=1}^{k_1}C_i^1 \right) \cap (S\cup T\cup Q')$.
If $L_1=\gamma =1$, add a self-loop at $u_1$ to $A'$ with a proper sign and go to Step 14.
If $L_1=\gamma >1$, add $\{ (u_1, u_i)\mid 2 \leq i \leq L_1\}$ to $A'$ with proper signs and go to Step 14.
\item[Step 3]
Let $\mathcal{C}=\left\{ C_1^2, C_2^2, \ldots , C_{k_2}^2, \ldots , C_1^K, C_2^K, \ldots , C_{k_K}^K \right\}$.
For each $C\in \mathcal{C}$, pick up two distinct elements of $C\cap (S\cup T)$ and label them as $l_i, r_i\ (i=1, 2, \ldots , |\mathcal{C}|)$.
Label the rest of the elements of $\bigcup_{C\in \mathcal{C}}C\cap (S\cup T)$ as $w_1, w_2, \ldots , w_{L_2}$ with $L_2 :=\sum_{i=3}^{K}(i-2)k_{i}$.
Add $\left\{ (u_i, w_i)\mid 1\leq i\leq \min\{ L_1, L_2 \} \right\}$ to $A'$ with proper signs.
\item[Step 4] 
Let $q_1, \ldots , q_{|Q|}$ be the elements of $Q$ and 
define $l_{| \mathcal{C} |+i}=r_{| \mathcal{C} |+i}=q_i$ for $i=1, 2, \ldots , |Q|$. 
\item[Step 5] If $L_2\leq L_1-2$, add $\left\{ (u_i, r_{|\mathcal{C}|+|Q|})\mid L_2 +1<i\leq L_1\right\}$, $\left\{ (r_i, l_{i+1}) \mid 1 \leq i<|\mathcal{C}|+|Q|\right\}$ and $(u_{L_2+1}, l_1)$ to $A'$ with proper signs and go to Step 14.
\item[Step 6] If $L_2=L_1-1$, add $\{ (r_i, l_{i+1})\mid 1\leq i<|\mathcal{C}|+|Q|\}$, $(u_{L_1}, l_1)$ and a self-loop at $r_{|\mathcal{C}|+|Q|}$ with proper signs and go to Step 14.
\item[Step 7] If $|Q|=1$, add a new vertex $q_2$ to $V$ and add $(q_1, q_2)$ with $(+, -)$ to $A'$.
Otherwise, add $(q_i, q_{i+1})$ to $A'$ with $(+, -)$ for $i=1, 2, \ldots , |Q|-1$.
\item[Step 8] Define a new bidirected graph $\hat{G}=(\hat{V}, \hat{A}; \hat{\partial})$ from the bidirected graph $G'=(V, A\cup A'; \partial' )$ as follows:
\[
\hat{V}:=\{ l_i, r_i\mid 1\leq i\leq |\mathcal{C}|\} \cup \{ w_j\mid L_1<j\leq L_2\} \cup \{ q_1, q_{\max \{ 2, |Q|\} }\} ,
\]
\[
\hat{A}:=\left\{ a=(u, v)\mbox{ with }(\pi _u, \pi _v)\mid \exists (\pi _u, \pi _v)\mbox{-path from }u\mbox{ to }v\mbox{ in }G', \{ u, v\} \subseteq \hat{V}\right\} .
\]
\item[Step 9] Construct a maximal matching $M=\{ m_1, m_2, \ldots , m_{|M|}\}$ ($m_i=(v_i^l, v_i^r)$) in the underlying graph of $\hat{G}$.
Add $B:=\{ (v_i^r, v_{i+1}^l)\mid 1\leq i\leq |M|\ (v_{|M|+1}^l:=v_1^l)\} $ to $A'$ with proper signs.
\item[Step 10] Let $p_1, p_2, \ldots , p_l$ be the vertices in $\hat{V}$ which are not the endpoints of any element of $M$.
Add $P:=\left\{ (p_{2i-1}, p_{2i})\mid 1\leq i\leq \lfloor l/2\rfloor \right\}$ with proper signs to $A'$.
If $l$ is odd, add a self-loop at $p_l$ to $A'$ with a proper sign.
\item[Step 11] Let $\tilde{G}$ be the output and $\alpha$ be the map in Step~5 of CONDENSE($G=(V, A\cup A'; \partial' )$).
\item[Step 12] Let $\tilde{v}$ be the only one element of $S(\tilde{G})\cup T(\tilde{G})\cup Q'(\tilde{G})\cup Q(\tilde{G})$.
If $\tilde{v}\in S(\tilde{G})$ (resp. $\tilde{v}\in T(\tilde{G})$), add a minus-loop (resp. a plus-loop) at an arbitrary element in $\alpha^{-1}(\tilde{v})$.
\item[Step 13] If $|Q|=1$ holds for the original input graph $G$, then remove the arc $(q_1, q_2)$ from $A'$.
\item[Step 14] Return $G'=(V, A\cup A'; \partial' )$.
\end{description}
Note that the above algorithm finds a feasible solution of size $\lambda (G)$ or $\lambda (G)+1$ in linear time (Table~\ref{TableApprox}).

Since the algorithm is the same as ADDITIONAL SIGNS($G$) when $L_2\leq L_1-1$, let us give a brief explanation on the case of $L_2\geq L_1-1$.
It is sufficient to show that $|S(\hat{G})\cup T(\hat{G})\cup Q'(\hat{G})\cup Q(\hat{G})|=1$ holds after Step 11 is executed.
By Step 10 of the algorithm, there is a proper cycle $C$ consisting of alternating sequence of $M$ and $P$ in $\hat{G}$.
The other elements in $S(\hat{G})\cup T(\hat{G})\cup Q'(\hat{G})\cup Q(\hat{G})$ are included in some improper cycle the root of which is included in $C$.
Therefore, there exists only one element of $S(\hat{G})\cup T(\hat{G})\cup Q'(\hat{G})\cup Q(\hat{G})$ after Step 11.
If the graph is not strongly connected, the all vertices have become strongly connected by adding a self-loop with the proper sign.

The cardinality of the solution is $\lambda (G)+1$ only if $\tilde{v}\in S(\tilde{G})\cup T(\tilde{G})$.
The above algorithm runs in linear time, thus the total algorithm runs in linear time for a general input bidirected graph.
\begin{Thm}
For Problem~\ref{ProbGeneral} with $F(A', \partial' )=|A'|$, a feasible solution with $|A'|=\lambda (G)$ or $\lambda (G)+1$ can be found in linear time.
\end{Thm}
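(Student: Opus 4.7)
The plan is to verify three properties of the output of ADDITIONAL ARCS($G$): that the result is strongly connected, that $|A'| \in \{\lambda(G), \lambda(G)+1\}$, and that the running time is linear. The analysis splits according to which branch of the algorithm actually terminates the execution.

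First, handle the easy branches. When $L_1 = \gamma$ (Step 2) or $L_2 \leq L_1 - 1$ (Steps 5 and 6), the arcs added coincide exactly with those added by ADDITIONAL SIGNS($G$). Hence the four claims proved for that algorithm in Section 4 transfer directly to establish strong connectivity, and one counts $|A'|$ by recording one arc per link or self-loop rather than one or two signs. A short arithmetic check using the identity $L_2 - (L_1 - 2) = |S| + |T| + |Q'| + 2|Q| - 2(\gamma - 1)$ established in Section 4 shows that the resulting count is $\lambda(G)$ or $\lambda(G) + 1$, with the larger value occurring when the parities force an extra self-loop.

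Second, treat the main case $L_2 \geq L_1$, which is the substance of Steps 7--12. The auxiliary graph $\hat{G}$ built in Step 8 has as its vertex set $\hat{V}$ precisely those vertices that can remain problematic after the preceding steps, and its arcs record reachability in the current $G'$ by signed paths. The central verification is the assertion sketched after the algorithm: after condensation in Step 11 the resulting $\tilde{G}$ satisfies $|S(\tilde{G}) \cup T(\tilde{G}) \cup Q'(\tilde{G}) \cup Q(\tilde{G})| \leq 1$. To see this, observe that the matching $M$ together with the wrap-around arcs $B$ built in Step 9 yields a proper cycle in $\hat{G}$, because the ``proper signs'' convention forces signs to alternate at each matched vertex. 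Each unmatched vertex paired up in Step 10 acquires, via its new arc, an improper cycle whose root lies on a path to some vertex of this proper cycle, so that after condensation every such vertex is absorbed into the same strongly connected component. The only possible survivor is a single residual source or sink in $\tilde{G}$, which Step 12 removes by adding one self-loop with the opposite sign. Tracing Step 13, which merely undoes the dummy arc introduced when $|Q| = 1$, completes the feasibility argument.

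Third, count arcs and check the running time. In the main case, Step 3 contributes $L_1$ arcs, Step 7 contributes at most $|Q|$ arcs, Step 9 contributes $|M|$ arcs, Step 10 contributes $\lceil l/2 \rceil$ arcs including at most one self-loop, and Step 12 adds at most one further self-loop; using $2|M| + l = |\hat{V}|$ with an explicit expression for $|\hat{V}|$ in terms of $|\mathcal{C}|$, $L_1$, $L_2$, and $|Q|$, these contributions collapse to $\lambda(G)$ or $\lambda(G) + 1$. The main obstacle of the proof is precisely this bookkeeping together with the sign-alternation argument for the matching cycle: both require case analysis distinguishing the parities of $l$ and of $|S| + |T| + |Q'|$, and one must also check that Step 13 does not disturb connectivity when removing the dummy arc. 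Linear running time is then immediate, since condensation (Section 3) is linear, a maximal (not maximum) matching can be built greedily in a single pass over $\hat{G}$, and every remaining step visits each vertex or arc a constant number of times.
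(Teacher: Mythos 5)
Your proposal is correct and follows essentially the same route as the paper: the branches $L_2\leq L_1-1$ are handled by reduction to ADDITIONAL SIGNS, the main case $L_2\geq L_1$ via the auxiliary graph $\hat{G}$, the proper cycle formed by the maximal matching with the wrap-around arcs, condensation leaving at most one residual source/sink removed by a single self-loop in Step~12, and the arc count and greedy maximal matching giving $\lambda(G)$ or $\lambda(G)+1$ in linear time. If anything, your accounting (using $2|M|+l=|\hat{V}|$) is more explicit than the paper's sketch, and your reading of the cycle as alternating $M$ and $B$ matches the algorithm's Step~9 better than the paper's phrasing.
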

\begin{table}
\caption{The relation between the output of the algorithm and the optimal solution.}
\label{TableApprox}
\centering
\begin{tabular}{|c||c|c|}\hline
Optimum$\backslash$Output&$\lambda$&$\lambda +1$\\ \hline \hline
$\lambda$&optimal&approximate\\ \hline
$\lambda +1$&$\not\exists$&optimal\\ \hline
\end{tabular}
\end{table}
Actually, there exists an example such that our algorithm returns the approximate solution (Figure~\ref{FigApproxEx}).
The original graph has five vertices and six $(+, +)$-arcs (Figure~\ref{FigApproxEx} (a)).
If one applies our algorithm to this graph, a solution of four arcs is obtained (Figure~\ref{FigApproxEx} (b)).
However, there exists a solution of three arcs (Figure~\ref{FigApproxEx} (c)).
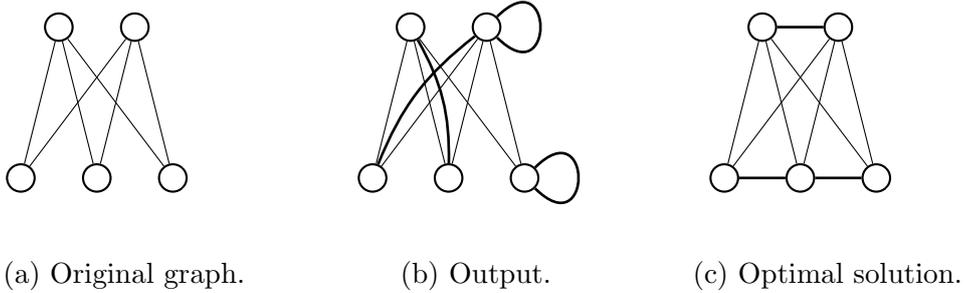
\begin{figure}[H]
	\centering
	\begin{minipage}{0.3\textwidth}
		\centering
		\begin{tikzpicture}
		\tikzset{vertex/.style={fill = white, draw, circle,thick}};
		% vertices
		\node[vertex] (v1) at (-0.5,1.5) {};
		\node[vertex] (v2) at (0.5,1.5) {};
		\node[vertex] (w1) at (-1,-0.5) {};
		\node[vertex] (w2) at (0,-0.5) {};
		\node[vertex] (w3) at (1,-0.5) {};
		% links
		\foreach \u / \v in {v1/w1,v1/w2,v1/w3,v2/w1,v2/w2,v2/w3}
		\draw (\u) to (\v);
		% dummy
		\draw [line width = 1pt,white] (w3) to [out=-45,in=45,looseness=10] (w3);
		\draw [line width = 1pt,white] (v2) to [out=-45,in=45,looseness=10] (v2);
		\end{tikzpicture}

		(a) Original graph.
	\end{minipage}
	\begin{minipage}{0.3\textwidth}
		\centering
		\begin{tikzpicture}
		\tikzset{vertex/.style={fill = white, draw, circle,thick}};
		% vertices
		\node[vertex] (v1) at (-0.5,1.5) {};
		\node[vertex] (v2) at (0.5,1.5) {};
		\node[vertex] (w1) at (-1,-0.5) {};
		\node[vertex] (w2) at (0,-0.5) {};
		\node[vertex] (w3) at (1,-0.5) {};
		% links
		\foreach \u / \v in {v1/w1,v1/w2,v1/w3,v2/w1,v2/w2,v2/w3}
		\draw (\u) to (\v);
		% additional links
		\draw [bend left=15,line width = 1pt] (v1) to (w2);
		\draw [bend right=15,line width = 1pt] (v2) to (w1);
		\draw [line width = 1pt] (w3) to [out=-45,in=45,looseness=10] (w3);
		\draw [line width = 1pt] (v2) to [out=-45,in=45,looseness=10] (v2);
		\end{tikzpicture}
		
		(b) Output.
	\end{minipage}
	\begin{minipage}{0.3\textwidth}
		\centering
		\begin{tikzpicture}
		\tikzset{vertex/.style={fill = white, draw, circle,thick}};
		% vertices
		\node[vertex] (v1) at (-0.5,1.5) {};
		\node[vertex] (v2) at (0.5,1.5) {};
		\node[vertex] (w1) at (-1,-0.5) {};
		\node[vertex] (w2) at (0,-0.5) {};
		\node[vertex] (w3) at (1,-0.5) {};
		% links
		\foreach \u / \v in {v1/w1,v1/w2,v1/w3,v2/w1,v2/w2,v2/w3}
		\draw (\u) to (\v);
		% additional links
		\draw[line width = 1pt] (v1) to (v2);
		\draw[line width = 1pt] (w1) to (w2);
		\draw[line width = 1pt] (w2) to (w3);
	    % dummy
		\draw [line width = 1pt,white] (w3) to [out=-45,in=45,looseness=10] (w3);
		\draw [line width = 1pt,white] (v2) to [out=-45,in=45,looseness=10] (v2);
		\end{tikzpicture}
		
		(c) Optimal solution.
	\end{minipage}
	\caption{Example for which the algorithm returns an approximate solution.
Signs on arcs are omitted.
(a) All arcs are $(+, +)$-arcs.
(b) Four bold arcs are additional ones.
Links are $(-, -)$-arcs and self-loops are with the minus sign.
(c) Three bold arcs are additional $(-, -)$-arcs.}
	\label{FigApproxEx}
\end{figure}

%%%%%%%%%%%%%%%%%%%%%%%%%%%%%%%%%%%%%%%%%%%%%%%%%%%%%%%%%%%%
\section{Concluding Remarks}
In this paper, we propose two types of problems to make a given bidirected graph strongly connected.
The first one aims at minimizing the number of additional signs on arcs and the second one aims at minimizing the number of additional arcs.
We give a linear-time algorithm to find an optimal solution for the former problem and a linear-time algorithm to find a feasible solution which can have one more arc than an optimal solution for the latter problem.

As future works, the following problem on minimization on arcs can be considered.
\begin{Prob}
Let $G=(V, A; \partial )$ be a bidirected graph.
Decide whether the minimum number of additional arcs to make $G$ strongly connected is $\lambda (G)$ or $\lambda (G)+1$.
\end{Prob}

Connectivity augmentation problems on bidirected graphs can also be considered,
e.g., arc-connectivity augmentation.
Let $G$ be $k$-arc-connected if $G'=(V, A\setminus A^{\circ}; \partial |(A\setminus A^{\circ}))$ is strongly connected for all $A^{\circ}\subseteq A$ with $|A^{\circ}|=k-1$.
\begin{Prob}
Let $G=(V, A; \partial )$ be a bidirected graph and $k$ be a positive integer.
Find additional arcs $A'$ and a boundary operator $\partial' : A\cup A'\to 3^V\ (\partial' a=\partial a\ (\forall a\in A))$ minimizing $F(A', \partial' )$ such that $G$ is $k$-arc-connected.
\end{Prob}
In a similar way, the definition of $k$-vertex-connectivity and the connectivity augmentation problem on it can be introduced.

Also, the generalization of the problem to make a given undirected graph connected or that to make a given directed graph strongly connected can be considered.
Although bidirected graphs can be seen as the common generalization of undirected graphs and directed graphs, the problems in this paper are not the generalization of these classical problems because there is no restriction on additional arcs.
For the case of directed graphs, the problem can be formulated as follows: 
\begin{Prob}
Let $G=(V, A; \partial )$ be a bidirected graph.
Find additional arcs $A'$ and a boundary operator $\partial' : A\cup A'\to 3^V\ (\partial' a=\partial a\ (\forall a\in A))$ minimizing $|A'|$ such that $G':=(V, A\cup A'; \partial' )$ is a strongly connected bidirected graph and $\left| \partial'^+a'\right| =\left| \partial'^-a'\right| =1$ for all $a'\in A'$.
\end{Prob}
\section*{Acknowledgments}
Both authors are supported by JSPS Research Fellowship for Young Scientists.
The research of the first author was supported by Grant-in-Aid for JSPS Research Fellow Grant Number 16J06879.


\begin{thebibliography}{99}
\bibitem{AF1994}
K. Ando, S. Fujishige: $\sqcup$,$\sqcap$-closed families and signed posets.
\textit{Discussion Paper Series}, \textbf{567}, Institute of Socio-Economic Planning, University of Tsukuba, 1994.
\bibitem{AF1996}
K. Ando, S. Fujishige: On structures of bisubmodular polyhedra.
\textit{Mathematical Programming} \textbf{74}:293--317, 1996.
\bibitem{AFN1996}
K. Ando, S. Fujishige, T. Nemoto: Decomposition of a bidirected graph into strongly connected components and its signed poset structure.
\textit{Discrete Applied Mathematics}, \textbf{68}:237--248, 1996.
\bibitem{Babenko2006}
M. A. Babenko: Acyclic bidirected and skew-symmetric graphs: algorithms and structure.
\textit{Computer Science - Theory and Applications, Proceedings of the First International Computer Science Symposium in Russia (LNCS 3967)}, 23--34, 2006.
\bibitem{EJ1970}
J. Edmonds, E. L. Johnson: Matching: a well-solved class of linear programs.
In: R. Guy, H. Hanani, N. Sauer, J. Sch\"{o}nheim (Eds.): \textit{Combinatorial Structures and Their Applications}, 88--92, 1970.
\bibitem{ET1976}
K. P. Eswaran, R. E. Tarjan: Augmentation problems.
\textit{SIAM Journal on Computing}, \textbf{5}:653--665, 1976.
\bibitem{GK1995}
A. V. Goldberg, A. V. Karzanov: Maximum skew-symmetric flows.
\textit{Proceedings of the Third Annual European Symposium on Algorithms}, 155--170, 1995.
\bibitem{GK1996}
A. V. Goldberg, A. V. Karzanov: Path problems in skew-symmetric graphs.
\textit{Combinatorica}, \textbf{16}:353--382, 1996.
\bibitem{GK2004}
A. V. Goldberg, A. V. Karzanov: Maximum skew-symmetric flows and matchings.
\textit{Mathematical Programming}, \textbf{100}:537--568, 2004.
\bibitem{Harary19531954}
F. Harary: On the notion of balance of a signed graph.
\textit{Michigan Mathematical Journal}, \textbf{2}:143--146, 1953--1954.
\bibitem{Iwata1998}
S. Iwata: Block triangularization of skew-symmetric matrices.
\textit{Linear Algebra and its Applications}, \textbf{273}:215--226, 1998.
\bibitem{Karp1972}
R. M. Karp: Reducibility among combinatorial problems.
In: R. E. Miller, J. W. Thatcher (Eds.): \textit{Complexity of Computer Computations}, 85--103, 1972.
\bibitem{MB2009}
P. Medvedev, M. Brudno: Maximum likelihood genome assembly.
\textit{Journal of Computational Biology}, \textbf{16}:1101--1116, 2009.
\bibitem{MGMB2007}
P. Medvedev, K. Georgiou, G. Myers, M. Brudno: Computability of models for sequence assembly.
\textit{Algorithms in Bioinformatics}, 289--301, 2007.
\bibitem{Tutte1967}
W. T. Tutte: Antisymmetrical digraphs.
\textit{Canadian Journal of Mathematics}, \textbf{19}:1101--1117, 1967.
\bibitem{Yasuda2015}
T. Yasuda: 
Inferring chromosome structures with bidirected graphs constructed from genomic structural variations.
Ph.D. Thesis, The University of Tokyo, 2015.
\bibitem{Zaslavsky1991}
T. Zaslavsky: Orientation of signed graphs.
\textit{European Journal of Combinatorics}, \textbf{12}:361--375, 1991.
\end{thebibliography}
\end{document}